\definecolor{shadecolor}{rgb}{0.878906, 0.878906, 0.878906}
\newcommand{\textred}[1]{{#1}}
\begin{document}
\title{Competitive Online Optimization under Inventory Constraints}

 \author{Qiulin Lin, Hanling Yi}
 \authornote{The first two authors contribute equally to the work.}
 \affiliation{%
  \department{Information Engineering}
  \institution{The Chinese University of Hong Kong}
 }

%

\author{John Pang}
\affiliation{%
 \department{Computing and Math. Sciences}
 \institution{California Institute of Technology}
}

\author{Minghua Chen}
\authornote{Corresponding author.}
\affiliation{%
 \department{Information Engineering}
 \institution{The Chinese University of Hong Kong}
}

\author{Adam Wierman}
\affiliation{%
 \department{Computing and Math. Sciences}
 \institution{California Institute of Technology}
}

\author{Michael Honig}
\affiliation{%
 \department{Electrical Engineering}
 \institution{Northwestern University}
}

\author{Yuanzhang Xiao}
\affiliation{%
 \department{Hawaii Center for Adv. Comm.}
 \institution{University of Hawaii at Manoa}
}

\renewcommand{\shortauthors}{Q. Lin, H. Yi, J. Pang, M. Chen, A. Wierman, M. Honig, and Y. Xiao}

\theoremstyle{plain}
\newtheorem{thm}{\protect\theoremname}
\theoremstyle{plain}
\newtheorem{defn}[thm]{\protect\definitionname}
\theoremstyle{plain}
\newtheorem{prop}[thm]{\protect\propname}
\theoremstyle{plain}
\newtheorem{lem}[thm]{\protect\lemmaname}
\newcommand\txtred[1]{{\color{red}#1}}
\newcommand\hanling[1]{{\color{blue}#1}}
\providecommand{\definitionname}{Definition}
\providecommand{\propname}{Proposition}
\providecommand{\lemmaname}{Lemma}
\providecommand{\theoremname}{Theorem}

\begin{abstract}
This paper studies online optimization under inventory (budget) constraints. While online optimization is a well-studied topic, versions with inventory constraints have proven difficult. We consider a formulation of inventory-constrained optimization that is a generalization of the classic one-way trading problem and has a wide range of applications.  We present a new algorithmic framework, \textsf{CR-Pursuit}, and prove that it achieves the minimal competitive ratio among all deterministic algorithms (up to a problem-dependent constant factor) for inventory-constrained online optimization. 
Our algorithm and its analysis not only simplify and unify the state-of-the-art results for the standard one-way trading problem, but they also establish novel bounds for generalizations including concave revenue functions. For example, for one-way trading with price elasticity 
, the \textsf{CR-Pursuit} algorithm achieves a competitive ratio that is within a small additive constant (i.e., 1/3) to the lower bound of $\ln \theta+1$, where $\theta$ is the ratio between the maximum and minimum base prices.

\end{abstract}

%
%
\begin{CCSXML}
<ccs2012>
<concept>
<concept_id>10003752.10003809.10010047</concept_id>
<concept_desc>Theory of computation~Online algorithms</concept_desc>
<concept_significance>500</concept_significance>
</concept>
<concept>
<concept_id>10003752.10003809</concept_id>
<concept_desc>Theory of computation~Design and analysis of algorithms</concept_desc>
<concept_significance>300</concept_significance>
</concept>
<concept>
<concept_id>10010405.10010481.10010484</concept_id>
<concept_desc>Applied computing~Decision analysis</concept_desc>
<concept_significance>300</concept_significance>
</concept>
</ccs2012>
\end{CCSXML}

\ccsdesc[500]{Theory of computation~Online algorithms}
\ccsdesc[300]{Theory of computation~Design and analysis of algorithms}
\ccsdesc[300]{Applied computing~Decision analysis}

\keywords{Inventory Constraints; Revenue Maximization; Online Algorithms; One-way Trading; Price Elasticity}

\maketitle

\section{Introduction}

Online optimization is a foundational topic in a variety of communities, from machine learning to control theory to operations research. There is a large and active community studying online optimization in a wide range of settings, both looking at theoretical analysis and real-world applications. The applications of online optimization are wide ranging, e.g., multi-armed bandits \cite{schlag1998imitate,auer1995gambling,bubeck2012regret}, network optimization (with packing constraints) \cite{guo2016dynamic,guo2018joint}, data center capacity management \cite{lin2013dynamic, lu2013simple, ren2018datum}, smart grid control \cite{lu2013online, zhang2018peak, pang2017optimal}, and beyond.  Further, a diverse set of algorithmic frameworks have been developed for online optimization, from the use of classical potential functions, e.g., \cite{grove1991harmonic,abernethy2009competing}, to primal-dual techniques, e.g., \cite{shalev2009mind,guo2016dynamic}, to approaches based on receding horizon control, e.g., \cite{michalska1993robust,mattingley2011receding}. Additionally, many variations of online optimization have been studied, e.g., online optimization with switching costs \cite{lin2012online1,bansal20152,li2018using}, online optimization with predictions \cite{lu2013online,chen2015online,li2018using}, convex body chasing \cite{friedman1993convex,antoniadis2016chasing,bansa2018nested}, and more.

In this paper, we focus on an important class of online optimization problems that has proven challenging: \emph{online optimization under inventory (budget) constraints (OOIC)}.  In these problems a decision maker has a fixed amount of inventory, e.g., airlines selling flight tickets or battery owners participating in power contingency reserves market, and must make a decision in each of the $T$ rounds with the goal of optimizing per-round revenue functions.  The challenge is that the decision maker does not have knowledge of future revenue functions or when the final round will occur, i.e., the value of $T$. Further, the strict inventory constraint means that an action now has consequences for future rounds.  As a result of this entanglement, positive results have only been possible for inventory constrained online optimization in special cases to this point, e.g., the one-way trading problem~\cite{el2001optimal}. 

More formally, a decision maker in an OOIC participates in $T$ rounds, without knowing $T$ ahead of time.  In each round, the decision maker selects an action $v_t\geq 0$, e.g., an amount to sell,  after observing a concave revenue function $g_t(\cdot)$.  Though the decision maker observes the revenue function each round before choosing an action, it is typically not desirable to choose an action to maximize the revenue in each round due to the limited inventory $\Delta$.  Specifically, the actions are constrained by $\sum_{t=1}^T v_t\leq \Delta$, and consequently an action taken at time $t$ constrains future actions.  In particular, if the inventory is used too early then better revenue functions may appear later, when inventory is no longer available.


OOIC generalizes many well-known online learning and revenue maximization problems.  One of the most prominent is the one-way trading problem \cite{el2001optimal}, where a trader owns some assets (e.g., dollars) and aims to exchange them into other assets (e.g., yen) as much as possible, depending on the price (e.g., exchange rate). There is a long history of work on one-way trading \cite{el2001optimal,chin2015competitive,damaschke2009online,fujiwara2011average,lorenz2009optimal,zhang2012optimal}, as we describe in Sec. \ref{application}, and OOIC includes both the classic one-way trading problem and variations with concave revenue functions and price elasticity.  

\textred{\textbf{Applications. }}Beyond the one-way trading problem, OOIC also captures a variety of other applications.  Three examples that have motivated our interest in OOIC are (i) power contingency reserve markets \cite{Shafiee2018Battery,Akhavan2014Storagy}, (ii) network spectrum trading \cite{Bogucka2012Spectrum,Qian2011Spectrum}, and \textred{(iii) online advertisement}.

In power contingency reserve markets, the system operator faces a contingency, e.g., shortfall of supply that may lead to cascading blackouts, and communicates this need to either supplement the power system using battery or cut down large scale power supply. Consider the perspective of a battery supply owner that is deciding when to take part in a contingency. A contingency may be solved immediately, or it may instead cause a larger contingency whereby the system operator is willing to pay more at a later time epoch. In preparation to participate in these contingencies, batteries are charged earlier and therefore the marginal cost of participation only manifests as an opportunity cost against future participation in the day. These situations highlight the need for the online properties considered in our work: (i) the unknown ending time $T$, (ii) future revenue functions are not known, and (iii) a costless, strict inventory constraint. 

Similarly, in spectrum trading, the owner of a spectrum band sells bandwidth to make sure that profit or revenue is maximized given the investments that have already been made to procure the particular bandwidth.  This means that any cost with regards to sales only appears as opportunity cost against future possible sales. Similarly, a potential buyer who is turned down may seek bandwidth from a different provider, and may never return, or situations may change between time epochs, highlighting the same three properties as before: (i) the unknown ending time $T$, (ii) future revenue functions are not known, and (iii) a costless and strict inventory constraint. 

\textred{In online advertisement, an advertiser with a given budget would like to invest into keywords from Internet search engines, e.g., Google AdWords. Potential keywords come in an online fashion and may be unavailable at any time. It has also been shown in \cite{Devanur2012onlinead} that revenue can be modelled as a concave function with respect to the investment. The advertiser needs to decide how to invest its budget for keywords to maximize the overall revenue, once again highlighting the same three properties listed above.}

\textbf{Contributions.}  In this paper we develop a new algorithmic framework, called \textsf{CR-Pursuit}, and apply it to develop online algorithms for the OOIC problem with the optimal competitive ratio (up to a problem-dependent constant factor).  Further, we prove that \textsf{CR-Pursuit} 
provides the first positive results for a generalization of the classic one-way trading problem with concave revenue functions and price elasticity.  In more detail, we summarize our contributions as follows.

First, we introduce a new algorithmic framework, \textsf{CR-Pursuit}, in Sec.~\ref{sec:online_alg_framework}. The framework is based on the idea of ``pursuing'' an optimized competitive ratio at all time. The framework is parameterized by a tight upper bound on the competitive ratio, which is then ``pursued'' with the actions in each round. We apply the framework to OOIC and generalizations of the one-way trading problem in this paper, but the framework has the potential for broad applicability beyond these settings as well. Along the way, we also derive several useful results on the offline optimal solution in Sec.~\ref{sec:offline}, which may be of independent interest.

Second, in Sec.~\ref{UB_CR}, we apply \textsf{CR-Pursuit} to the OOIC problem to achieve the optimal competitive ratio among all deterministic algorithms (up to a problem-dependent constant factor). To obtain these bounds we use two technical ideas that are of general interest beyond OOIC.  First, we prove that it suffices to focus on the single-parametric \textsf{CR-Pursuit} algorithm for achieving the optimal competitive ratio, thus significantly reducing the search space of optimal online algorithms. Second, we identify a ``critical'' input sequence that highlights an important structural property of the space of input sequences. By applying \textsf{CR-Pursuit} to this critical sequence, we characterize a lower bound on the optimal competitive ratio as $\ln \theta +1$ where $\theta$ is the ratio between the maximum and minimum base prices to be defined in Sec.~\ref{ProblemFormulation}. Subsequently, for any other input, the performance ratio achieved by \textsf{CR-Pursuit} is upper bounded by the product of a problem-dependent factor and the lower bound. This structure not only suggests a principled approach to characterizing the optimal competitive ratio, but also immediately shows that \textsf{CR-Pursuit} achieves the optimal competitive ratio (up to a problem-dependent factor) among all deterministic algorithms.


Third, we apply \textsf{CR-Pursuit} to one-way trading problems in Sec.~\ref{application}. The novel framework simplifies and unifies the state-of-the-art results of the classic one-way trading problem. In particular, the critical input discussed above is simply the worst case one for classical one-way trading; hence, \textsf{CR-Pursuit} achieves the optimal competitive ratio $\ln \theta+1$. Further, we show that \textsf{CR-Pursuit} performs well for generalizations of one-way trading where no positive results were previously known.  Specifically, for one-way trading with price elasticity and concave revenue functions, \textsf{CR-Pursuit} achieves a competitive ratio that is 
within a small additive constant (i.e., $1/3$) to the general lower bound of $\ln \theta+1$.

\section{Related work.}\label{related_work}
Online optimization is a large and rich research area and excellent surveys can be found in \cite{albers2003online,fiat1998online}.  Well-known problems in the online optimization paradigm include the classic secretary problem \cite{chow1964optimal}, the ski rental problem \cite{karlin1988competitive}, the one-way trading problem \cite{el2001optimal}, and the $k$-server problem \cite{fiat1990competitive}. Our results represent the most general results to date for a situation where actions are subject to a fixed inventory constraint. 


The problem considered here is a generalization of the classical one-way trading problem, which has received considerable attention, e.g., \cite{el2001optimal,chin2015competitive,damaschke2009online,fujiwara2011average,lorenz2009optimal,zhang2012optimal}. In the one-way trading problem an online decision maker is sequentially presented with exchange rates within a bounded region, and she desires to trade all her assets to another. The amount of assets traded in a single time period is assumed to be small enough to not affect the eventual price. El-Yaniv et. al.~\cite{el2001optimal} propose a threshold-based online algorithm with competitive ratio $O(\ln\theta)$. Any remaining items must be sold at the last epoch as that is revenue maximizing. On the other hand, our analysis allows for leftover inventory (since selling all assets at the last time step may not be revenue maximizing solution for the last time step in the presence of price elasticity or concave revenue functions) and an unknown stopping time, while retaining the competitive ratio. 

Variants of the one-way trading problem have
been studied in the literature. Chin et al. \cite{chin2015competitive} and Damaschke et al.~\cite{damaschke2009online} study the one-way trading problem with unbounded prices and time-varying price bounds, respectively. Zhang et al. \cite{zhang2012optimal} study the problem when every two consecutive prices are interrelated. Fujiwara et al. \cite{fujiwara2011average} study the problem using average-case competitive
analysis under the assumption that the distribution of the maximum
exchange rate is known. Kakade et al. \cite{kakade2004competitive}
incorporate market volume information and study another one-way trading
model in the stock market, called the price-volume trading problem. 
While the classical one-way trading problem mostly deals with linear revenue functions, we note that in our problem we consider general concave revenue functions, which allows us to capture a boarder class of interesting settings, e.g.,  one-way trading with price elasticity.

Beyond the one-way trading problem, OOIC is also highly related to generalizations of the secretary problem and prophet inequalities, e.g., \cite{rubinstein2016beyond,feldman2018submodular, babaioff2008online}. 
Strong positive results have been obtained for these problems; however the analytic setting considered differs dramatically from the current paper.  Specifically, we consider a worst case analysis whereas analysis of the secretary problem and prophet inequalities focus on stochastic instances.  Under the stochastic setting, so-called ``thresholding'' algorithms are effective; however such algorithms have unbounded competitive ratios in the worst case setting, even under the simplest assumptions.  

Prior to this work, the most general results known for online problems with inventory constraints are for the class of problems termed online optimization with packing constraints, e.g., \cite{buchbinder2005online,buchbinder2009design,Azar2016convex,arora2012multiplicative,bansal2012primal}.  This stream of work developed an interesting algorithmic framework based on a primal-dual or multiplicative weights update approaches, which centers around maintaining a dual variable for each constraint, understood as a shadow (or pseudo) price for the constraint given the information thus far.  While the inventory constraints we consider are packing constraints, our formulation is fundamentally different than the formulation considered in these papers.  In these papers, the constraints come in an online fashion; whereas in our work, the revenue functions arrive in an online fashion. 

Another related online optimization problem is the $k$-search problem, where a player searches for the $k$ highest prices in a sequence that is revealed to her sequentially. When $k\to \infty$, the $k$-max search problem becomes the one-way trading problem~\cite{lorenz2009optimal}. Lorenz et. al.~\cite{lorenz2009optimal} propose optimal deterministic and randomized online algorithms for both the $k$-max search and $k$-min search problem. This is different from the well-known $k$-server problem, where an online algorithm must control the movement of $k$ servers in a metric space to minimize the movement (or latency involved) in serving future requests. A popular algorithmic framework for the $k$-server problem is the \emph{potential function framework}. In contrast to our CR-Pursuit approach, the potential function approach requires a bound between the offline optimal cost and the online cost at each time epoch with respect to the potential. 

Finally, it is important to distinguish our work from the literature studying \emph{regret} in online  optimization, e.g.,   \cite{hazan2007logarithmic,chen2015online}.  While regret is a natural measure for many online optimization problems, when inventory constraints are present it is no longer appropriate to compare against the best static action, as is done by regret.  Static actions are poor choices when optimizing revenue subject to inventory constraints.  Instead, competitive ratio is the most appropriate measure.  Further, note that there is a fundamental algorithmic trade-off between optimizing regret and competitive ratio, even when inventory constraints are not present.  In particular, \cite{andrew2013tale} shows that no algorithm can obtain both sub-linear regret and constant competitive ratio.  

\section{Problem Formulation}
\label{ProblemFormulation}
We study an online optimization problem where a decision maker sells inventory across an interval of discrete time slots in order to maximize the aggregate revenue. The revenue functions of individual slots are revealed sequentially in an online fashion, and the interval length is unknown to the decision maker. The initial inventory is given in advance as a constraint, and it bounds the decision maker's aggregate selling quantities across time slots. \footnote{We emphasize that in contrast to the works on online optimization with packing constraints \cite{buchbinder2005online,buchbinder2009design,Azar2016convex,arora2012multiplicative,bansal2012primal}, the uncertainty in our optimization problem is not that the inventory constraint is unknown beforehand, but rather the revenue functions arrives in an online fashion. } The key notations used in this paper are summarized in Tab.~\ref{tab:TableOfNotations}. Throughout this paper, we use $[n]$  to represent the set $\{1,2,...,n\}$ where $n$ is a positive integer.


 \begin{table}[h!]\caption{Summary of Notations.}
 \centering 
 \begin{tabular}{c|p{6.5cm}}
 \toprule
 $T$  & The number of time slots\\
 \hline
 $\Delta$  &  The initial inventory \\
 \hline
 $g_t(v)$  & The revenue function of  time slot $t$ \\
 \hline
 $\sigma^{[1:t]}$ & Input (revenue function) sequence up to time $t$, i.e., $\{g_1,g_2,..., g_t\}$ \\
 \hline
 $p(t)$ & Base price at time $t$, i.e., $g'_t(0)$\\
 \hline
 $m,\,M$  & The lower and upper bounds of $p(t)$, $\forall t\in [T]$\\
 \hline
 $\theta$ & The ratio of $M/m$\\
 \hline
 $\lambda$ & The dual variable associated with the inventory constraint in OOIC \\
 \hline
 $v_t$  & The selling quantity at time $t$ \\
 \hline
 $\bar{v}_t$  & The selling quantity of CR-Pursuit($\pi$) at time $t$  \\
 \hline
 $v_t^*$  & The optimal selling quantity at time $t$ under the offline setting\\
 \hline
 $\hat{v}_t$ & A maximizer of $g_t(v)$ over $[0,\Delta]$  \\
 \hline
 $\Phi_\Delta(\pi)$ & The worst case (maximal) inventory over all possible sequences of inputs needed to maintain a competitive ratio $\pi\geq 1$ for \textsf{CR-Pursuit}($\pi$)\\
 \bottomrule
 \end{tabular}
 \label{tab:TableOfNotations}
 \end{table}

More specifically, at time $t\in[T]$, upon observing the revenue function  $g_t(\cdot)$, the decision maker has to make an irrevocable decision on an action (quantity) $v_t$. Upon choosing $v_t$ the decision maker receives a revenue of $g_t(v_t)$. The overall objective is to maximize the aggregate revenue, while respecting the inventory constraint $\sum_{t\in [T]}v_t \leq\Delta$.\footnote{The assumption that the action is chosen \emph{after} observing the function differs from the classical online convex optimization literature \cite{hazan2007logarithmic,li2012online}, but matches the literature on online convex optimization with switching costs \cite{lin2012online1,bansal20152,li2018using} and the literature on competitive algorithm design, including those on buy-or-rent decision making problems~\cite{karlin1988competitive,lu2013simple,zhang2018peak} and metrical task systems \cite{borodin1992optimal,fiat2003better,lu2013online}.  It allows an isolation of the inefficiency resulting from  inventory constraints rather than also including the inefficiency resulting from the of lack of knowledge of the function.} We assume that $g_t(\cdot),\forall t\in [T]$, satisfy the following conditions:
\begin{itemize}
    \item $g_t(v)$ is concave, increasing, and differentiable over $[0,\Delta]$;
    \item $g_t(0)=0$;
    \item $p(t)\triangleq g'_t(0)>0$ and $p(t)\in[m,M]$.
\end{itemize}

The first condition is a smoothness condition on the revenue function and a natural diminishing return assumption. It also limits our discussion in the more interesting setting where at each time, selling more could never decrease revenue. The second condition implies that selling nothing yields no revenue. The third condition limits the marginal revenue at the origin (named base price hereafter) and ensures that it is beneficial to sell, since the base price is positive. Denote the family of all possible revenue functions at time $t$ as $\mathcal{G}$. We assume $m$ and $M$ are known beforehand to the decision maker and denote $\theta=M/m$.

We formulate the problem of online optimization under inventory constraints (OOIC) as follows:
\begin{align}
\text{OOIC}:\quad\max\quad & \sum_{t=1}^{T}g_t(v_t)\\
\mbox{s.t.}\quad & \sum_{t=1}^{T}v_t\le\Delta,\label{eq:inventory_const}\\
\mbox{var.}\quad & v_t\ge0,\forall t\in[T]. \label{eq:v_t_geq_0_constraints}
\end{align}
\textred{Without loss of generality, we assume that the inventory constraint in \eqref{eq:inventory_const} is active at the optimal solution.}

We can interpret the inventory constraint (\ref{eq:inventory_const}) in an OOIC in a parallel way to the inventory constraint in the one-way trading problem \cite{el2001optimal}.  In particular, in the one-way trading problem the trader has to decide in each slot the selling quantity $v_t$ to maximize the total revenue at the stopping time $T$.  In fact, when setting the family of functions $\mathcal{G}$ to be the family of revenue functions of the form $g_t(v_t) = p(t)v_t$, we can see OOIC covers the  one-way trading problem as a special case. Additionally, when addressing revenue functions of the form $g_t(v_t) = v(t) (p(t) - f_t(v_t) )$ where $f_t$ is a convex function representing price elasticity, OOIC represents a generalized one-way trading problem with price elasticity.

To study the performance of an online algorithm for OOIC we use the \emph{competitive ratio} as the metric of interest.\footnote{Note that many papers in the online optimization literature, e.g., \cite{hazan2007logarithmic}, focus on \emph{regret} instead of competitive ratio, but regret is not an appropriate measure when inventory constraints are considered since static actions are no longer appropriate.  Our focus on competitive ratio matches that of the literature on secretary problems \cite{babaioff2008online, rubinstein2016beyond}, prophet inequalities \cite{hajiaghayi2007automated,rubinstein2016beyond}, online optimization with switching costs \cite{lin2012online1,lu2013online,lu2013simple,bansal20152,li2018using}, etc.}   Let $\mathcal{A}$ be a deterministic
online algorithm. It is called
$\pi$-\emph{competitive} if
\[
\pi=\max_{\sigma\in\Sigma}\;\frac{\eta_{OPT}(\sigma)}{\eta_{\mathcal{A}}(\sigma)},
\]
where $\Sigma$ is the set of all possible inputs ($g_t(\cdot)$, $t\in[T]$)
and $\eta_{OPT}(\sigma)$ and $\eta_{\mathcal{A}}(\sigma)$ are the
revenues generated by the optimal offline algorithm $OPT$ and the online algorithm $\mathcal{A}$, respectively. This value $\pi$ is the competitive ratio (CR) of the algorithm $\mathcal{A}$.

\section{Insights on the Offline Solution} \label{sec:offline}
In this section, we 
derive several results on the optimal offline solution. They are useful in the design and analysis of our algorithmic framework \textsf{CR-Pursuit} in Sec.~\ref{sec:online_alg_framework}.



Under the offline setting where $g_t(\cdot)$, $\forall t\in[T]$, are known in advance to the decision maker, OOIC is a convex problem and can be solved efficiently. Let $v^*$ be the optimal primal solution and $\lambda^*$ be the optimal dual variable associated with the inventory constraint in \eqref{eq:inventory_const}. We note that $\lambda^*$ can be obtained by the algorithm in Alg.~\ref{alg:Offline-algorithm} in Appendix~\ref{apx:offline.optimal.alg}, based on a binary search idea. The following proposition gives a set of optimality conditions for the optimal primal solutions $v^*$ and the optimal dual variable $\lambda^*$.




\begin{prop}
\label{thm:offline_solution} Under our setting that the inventory constraint is active at the optimal solution, the optimal primal and dual solutions $v^*$ and $\lambda^{*}$ satisfy  (i) $\lambda^*\geq 0$ and $\sum_{t=1}^Tv^*_t=\Delta$ and (ii) for each $t\in[T]$,
\begin{align}
    & \begin{cases}
v_{t}^{*} = 0, & \mbox{if }g'_{t}\left(0\right)<\lambda^{*};\\
v_t^* \in V_t(\lambda^*)\triangleq\{v_t|g'_t(v_t)=\lambda^*,v_t\in[0,\Delta]\}, & \mbox{otherwise}.
\end{cases}
\end{align}
\end{prop}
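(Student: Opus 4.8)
The plan is to recognize OOIC as a convex program and read off the claim from the Karush--Kuhn--Tucker (KKT) conditions, which here are both necessary and sufficient. The objective $\sum_{t=1}^{T}g_t(v_t)$ is concave since each $g_t$ is concave, and the feasible region cut out by \eqref{eq:inventory_const} and \eqref{eq:v_t_geq_0_constraints} is polyhedral; because all constraints are affine, the KKT conditions characterize optimality exactly, with no constraint qualification beyond affineness needed (a strictly interior point such as $v_t=\Delta/(2T)$ also certifies Slater's condition, should one prefer).

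First I would form the Lagrangian
\[
L(v,\lambda,\mu)=\sum_{t=1}^{T}g_t(v_t)-\lambda\Big(\sum_{t=1}^{T}v_t-\Delta\Big)+\sum_{t=1}^{T}\mu_t v_t,
\]
with $\lambda\ge 0$ the multiplier for the inventory constraint and $\mu_t\ge 0$ the multipliers for the nonnegativity constraints. Stationarity in $v_t$ gives $g'_t(v_t^*)=\lambda^*-\mu_t^*$ for each $t$, together with complementary slackness $\mu_t^* v_t^*=0$ and dual feasibility $\lambda^*,\mu_t^*\ge 0$. Part (i) is then immediate: $\lambda^*\ge 0$ is dual feasibility, and $\sum_t v_t^*=\Delta$ is exactly the standing assumption that the inventory constraint is active. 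I would also record the elementary fact that, since $\sum_s v_s^*=\Delta$ and every $v_s^*\ge 0$, each $v_t^*\in[0,\Delta]$; this is what lets membership in $V_t(\lambda^*)$ make sense.

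For part (ii) I would argue by cases on the sign of $v_t^*$, using that concavity of $g_t$ makes $g'_t$ non-increasing. If $v_t^*>0$, then $\mu_t^*=0$ by complementary slackness, so stationarity yields $g'_t(v_t^*)=\lambda^*$; combined with $v_t^*\in[0,\Delta]$ this gives $v_t^*\in V_t(\lambda^*)$. Moreover $g'_t(0)\ge g'_t(v_t^*)=\lambda^*$, so the hypothesis $g'_t(0)<\lambda^*$ forces instead $v_t^*=0$, which is the first branch of the claim. If $v_t^*=0$, stationarity reads $g'_t(0)=\lambda^*-\mu_t^*\le\lambda^*$; hence whenever we are in the ``otherwise'' branch $g'_t(0)\ge\lambda^*$ we must in fact have $g'_t(0)=\lambda^*$, i.e. $0\in V_t(\lambda^*)$, consistent with $v_t^*=0$. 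Assembling these observations reproduces exactly the two-case description in the statement.

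The only delicate point, and it is mild, is the boundary case $v_t^*=0$ with $g'_t(0)=\lambda^*$: I must confirm both that this is the unique way a zero action can coexist with the ``otherwise'' branch (ruling out $g'_t(0)>\lambda^*$ paired with $v_t^*=0$, which stationarity forbids) and that $0$ genuinely lies in $V_t(\lambda^*)$ there. No heavy machinery is required; the whole argument rests on stationarity, complementary slackness, and the monotonicity of $g'_t$ inherited from concavity.
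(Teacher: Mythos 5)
Your proof is correct and follows essentially the same route as the paper's: both form the Lagrangian of the convex program OOIC and read the claim off the KKT conditions (stationarity, complementary slackness, dual feasibility) combined with the monotonicity of $g'_t$ inherited from concavity. Your write-up is if anything slightly tidier — you invoke the activeness assumption directly to get $\sum_{t}v_t^*=\Delta$ and explicitly justify the constraint qualification, whereas the paper instead runs a case analysis on whether $\lambda^*=0$ or $\lambda^*>0$ — but the substance is the same.
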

Recall that at time $t$, the marginal revenue evaluated at $v_t$ is $g'_t(v_t)$, which is no larger than the base price $p(t)=g'_t(0)$ due to the concavity of $g_t(\cdot)$. The optimal dual variable $\lambda^{*}$ can be interpreted as the marginal cost (shadow price) of the inventory. Then Proposition~\ref{thm:offline_solution} says that, at the optimal solution, the marginal revenue must equal the marginal cost in the slots with positive selling quantities.   Moreover, it is optimal to sell only in the slots in which the base price is higher than the optimal marginal cost, i.e., $p(t)>\lambda^*$. These observations are similar to those in the Cournot competition literature, e.g., \cite{pang2017efficiency}.

Next, we reveal two interesting observations on the offline optimal aggregate revenue. Recall the input (revenue function) sequence until time $t$ is 
\[
    \sigma^{[1:t]}=\sigma^{[1:t-1]}\cup \left\{g_t(\cdot )\right\} = \sigma^{[1:t-2]}\cup \left\{g_{t-1}(\cdot )\right\} \cup \left\{g_t(\cdot )\right\} = \cdots .
\]
Recall that $\eta_{OPT}\left(\sigma^{[1:t]}\right)$ is the offline optimal aggregate revenue given the input $\sigma^{[1:t]}$. The following lemma bounds the increment of the optimal aggregate revenue as $t$ increases.

%

\begin{lem}
\label{lem:opt_difference_bound} 
Let  $\lambda_{t-1}$ and $\lambda_t$ be the optimal dual variables associated with the inventory constraint given the inputs $\sigma^{[1:t-1]}$ and $\sigma^{[1:t]}=\sigma^{[1:t-1]}\cup \left\{g_t(\cdot )\right\}$, respectively. Let $\tilde{v}_t$ be the optimal offline solution in the (last) time slot $t$ given the input $\sigma^{[1:t]}$. The following inequalities hold:
\begin{align}
\eta_{OPT}\left(\sigma^{[1:t]}\right)-\eta_{OPT}\left(\sigma^{[1:t-1]}\right) \geq & g_t\left(\tilde{v}_t\right)-\lambda_{t}\tilde{v}_t, \label{eq:off_Optimal_value_increment_lower_bound}
\end{align}
and 
\begin{align}
\eta_{OPT}\left(\sigma^{[1:t]}\right)-\eta_{OPT}\left(\sigma^{[1:t-1]}\right)\leq & g_t\left(\tilde{v}_t\right)-\lambda_{t-1}\tilde{v}_t \leq g_t\left(\hat{v}_t\right), \label{eq:off_Optimal_value_increment_upper_bound}
\end{align}
where $\hat{v}_t$ is the maximizer of $g_t(\cdot)$ over $[0,\Delta]$.
\end{lem}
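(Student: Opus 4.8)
The plan is to prove both inequalities simultaneously through the Lagrangian dual obtained by relaxing only the inventory constraint \eqref{eq:inventory_const}, which makes the problem separable across slots. For each slot $s$ define the single-slot value $\phi_s(\lambda)=\max_{0\le v\le\Delta}\left(g_s(v)-\lambda v\right)$, and for the prefix $\sigma^{[1:k]}$ define the dual function $D_k(\lambda)=\sum_{s=1}^{k}\phi_s(\lambda)+\lambda\Delta$. Each $\phi_s$ is a pointwise maximum of affine functions of $\lambda$ and hence convex, so $D_k$ is convex; and since $v=0$ strictly satisfies \eqref{eq:inventory_const}, Slater's condition holds and strong duality gives $\eta_{OPT}(\sigma^{[1:k]})=\min_{\lambda\ge0}D_k(\lambda)=D_k(\lambda_k)$, where $\lambda_k$ is the optimal dual variable of Proposition~\ref{thm:offline_solution}. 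The quantity to bound is thus $D_t(\lambda_t)-D_{t-1}(\lambda_{t-1})$, and the additive identity $D_t=D_{t-1}+\phi_t$ together with $\phi_t(\lambda_t)=g_t(\tilde v_t)-\lambda_t\tilde v_t$ (as $\tilde v_t$ maximizes $g_t(v)-\lambda_t v$ over $[0,\Delta]$ by the optimality conditions) supplies the building blocks for both sides.

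First I would establish the lower bound \eqref{eq:off_Optimal_value_increment_lower_bound}, which is immediate: optimality of $\lambda_{t-1}$ for $D_{t-1}$ gives $D_{t-1}(\lambda_{t-1})\le D_{t-1}(\lambda_t)$, hence
\[
D_t(\lambda_t)-D_{t-1}(\lambda_{t-1})\ \ge\ D_t(\lambda_t)-D_{t-1}(\lambda_t)=\phi_t(\lambda_t)=g_t(\tilde v_t)-\lambda_t\tilde v_t .
\]
For the upper bound \eqref{eq:off_Optimal_value_increment_upper_bound} I would instead peel off slot $t$ first, writing $D_t(\lambda_t)-D_{t-1}(\lambda_{t-1})=\bigl(D_{t-1}(\lambda_t)-D_{t-1}(\lambda_{t-1})\bigr)+\bigl(g_t(\tilde v_t)-\lambda_t\tilde v_t\bigr)$, and control the first bracket by a subgradient inequality for the convex function $D_{t-1}$. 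The key claim is that $\tilde v_t$ is itself a subgradient of $D_{t-1}$ at $\lambda_t$: by the envelope (Danskin) theorem every maximizer $v_s$ of $g_s(v)-\lambda_t v$ yields $-v_s\in\partial\phi_s(\lambda_t)$, and Proposition~\ref{thm:offline_solution} identifies the first $t-1$ coordinates $v^{(t)}_1,\dots,v^{(t)}_{t-1}$ of the optimal $t$-slot solution (whose last coordinate is $\tilde v_t$) as such maximizers, so $\Delta-\sum_{s=1}^{t-1}v^{(t)}_s\in\partial D_{t-1}(\lambda_t)$. Because the inventory constraint is active for the $t$-slot problem, $\sum_{s=1}^{t-1}v^{(t)}_s=\Delta-\tilde v_t$, and this subgradient equals exactly $\tilde v_t$. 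The subgradient inequality $D_{t-1}(\lambda_{t-1})\ge D_{t-1}(\lambda_t)+\tilde v_t(\lambda_{t-1}-\lambda_t)$ then rearranges to $D_{t-1}(\lambda_t)-D_{t-1}(\lambda_{t-1})\le\tilde v_t(\lambda_t-\lambda_{t-1})$, and substituting this back collapses the $\lambda_t\tilde v_t$ terms to yield $g_t(\tilde v_t)-\lambda_{t-1}\tilde v_t$. The final inequality $g_t(\tilde v_t)-\lambda_{t-1}\tilde v_t\le g_t(\hat v_t)$ is trivial since $\lambda_{t-1}\ge0$, $\tilde v_t\ge0$, and $\hat v_t$ maximizes $g_t$ over $[0,\Delta]\ni\tilde v_t$.

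The step I expect to be the main obstacle is pinning down that the relevant subgradient of $D_{t-1}$ at $\lambda_t$ is precisely $\tilde v_t$, rather than merely some residual quantity. This is exactly where the active-inventory assumption and the KKT characterization of Proposition~\ref{thm:offline_solution} must be combined, and it is also where care about non-differentiability is needed: if $g_s$ is only concave, the maximizers need not be unique and $D_{t-1}$ need not be differentiable at $\lambda_t$, so I would phrase the argument entirely in terms of subgradients (choosing the maximizers to be the first $t-1$ coordinates of the $t$-slot optimum) instead of derivatives. An equivalent route I would keep in reserve is the primal value-function view: writing $\eta_{OPT}(\sigma^{[1:t]})=\max_{0\le v\le\Delta}\bigl(g_t(v)+V_{t-1}(\Delta-v)\bigr)$, where $V_{t-1}$ is the offline value of the first $t-1$ slots as a function of the budget, the two inequalities follow from the concavity of $V_{t-1}$ by comparing the increment $V_{t-1}(\Delta)-V_{t-1}(\Delta-\tilde v_t)$ to its right-endpoint slope $\lambda_{t-1}$ and its left-endpoint slope $\lambda_t$; this makes the same two facts (the right slope is $\lambda_{t-1}$ and the left slope is $\lambda_t$) the crux, confirming where the real work lies.
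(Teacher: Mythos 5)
Your proof is correct in substance, and it takes a genuinely different route from the paper's. The paper argues entirely on the primal side: it pairs the optimal prefix solutions $\tilde v_\tau$ (under $\sigma^{[1:t]}$) and $\bar v_\tau$ (under $\sigma^{[1:t-1]}$) slot by slot, bounds each difference $g_\tau(\tilde v_\tau)-g_\tau(\bar v_\tau)$ via concavity and the KKT identities $g'_\tau(\cdot)=\lambda$, and closes with the budget comparison $\sum_{\tau\le t-1}\bar v_\tau\le\sum_{\tau\le t}\tilde v_\tau$; crucially, it invokes the monotonicity $\lambda_{t-1}\le\lambda_t$ (used without proof) to get the inequalities pointing the right way. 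Your argument instead works on the dual: writing the increment as $D_t(\lambda_t)-D_{t-1}(\lambda_{t-1})$, the lower bound \eqref{eq:off_Optimal_value_increment_lower_bound} drops out of optimality of $\lambda_{t-1}$ for $D_{t-1}$, and the upper bound \eqref{eq:off_Optimal_value_increment_upper_bound} from the subgradient inequality for the convex function $D_{t-1}$ at $\lambda_t$. This buys two things: the slot-by-slot concavity estimates are packaged into standard convex-duality facts, and (in the main case) you never need the dual monotonicity $\lambda_{t-1}\le\lambda_t$, since the subgradient inequality is valid regardless of how $\lambda_{t-1}$ and $\lambda_t$ are ordered. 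What the paper's route buys is elementarity: no strong duality, Danskin, or subdifferentials, only concavity and KKT.

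One caveat deserves attention: your identification of the subgradient of $D_{t-1}$ at $\lambda_t$ as exactly $\tilde v_t$ relies on the inventory constraint being active for the \emph{prefix} problem $\sigma^{[1:t]}$. When $\lambda_t>0$ this is free by complementary slackness, so no assumption is needed there. But when $\lambda_t=0$ the prefix optimum need not exhaust the inventory (the paper's standing ``active constraint'' assumption concerns the full horizon, and indeed the paper's own proof of Lemma~\ref{lem:opt_difference_bound} treats $\lambda_t=0$ as a separate case), so your key claim can fail as stated. The gap closes in one line with your own machinery: in general the Danskin subgradient is $u=\Delta-\sum_{s=1}^{t-1}v^{(t)}_s\ge\tilde v_t$ (by primal feasibility), and when $\lambda_t=0$ we have $\lambda_t-\lambda_{t-1}\le 0$, so $u(\lambda_t-\lambda_{t-1})\le\tilde v_t(\lambda_t-\lambda_{t-1})$ and the bound on the bracket $D_{t-1}(\lambda_t)-D_{t-1}(\lambda_{t-1})$ survives unchanged. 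With that sentence added, your proof is complete and, arguably, cleaner than the paper's.
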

Note that $\lambda_{t-1}$ and $\lambda_t$ are the marginal costs of inventory at the optimal solutions to OOIC given the inputs $\sigma^{[1:t-1]}$ and $\sigma^{[1:t]}=\sigma^{[1:t-1]}\cup \left\{g_t(\cdot )\right\}$, respectively. The terms $\lambda_{t}\tilde{v}_t$ and $\lambda_{t-1}\tilde{v}_t$ represent the upper and lower bounds of the cost of committing $\tilde{v}_t$ to obtain the new additional revenue $g_t\left(\tilde{v}_t\right)$ in time slot $t$. Thus the difference between them represents a bound on the ``profit'' obtained in slot $t$. Intuitively, Lemma~\ref{lem:opt_difference_bound} says that one can bound the optimal offline revenue increment by these profit bounds, as shown in \eqref{eq:off_Optimal_value_increment_lower_bound} and \eqref{eq:off_Optimal_value_increment_upper_bound}.


\textred{The proof of Lemma \ref{lem:opt_difference_bound} is included in the Appendix~\ref{app:proof_opt_difference_bound}; we give the proof idea here. Compared with the optimal solution under $\sigma^{[1:t-1]}$, the optimal solution under $\sigma^{[1:t]}$ is smaller at $\tau, \forall\tau\leq t-1$, in order to commit $\tilde{v}_t$ to $g_t{(\cdot)}$, which cause a decrement in revenue. Furthermore, the per-unit revenue lost  is upper bounded by $\lambda_t$ and lower bounded by $\lambda_{t-1}$. Combining the two understandings gives the bounds of the increment on optimal revenue at each time.}

The upper bound in \eqref{eq:off_Optimal_value_increment_upper_bound} also highlights an intuitive result that the increment of the optimal aggregate revenue from $t-1$  to $t$ is at most $ g_t\left(\hat{v}_t\right)$, i.e., the maximum revenue one can obtain in slot $t$.




Our last result in this section, as stated in the lemma below, reveals another subtle yet important property of the increment of the optimal aggregate revenue.


\begin{lem}
\label{lem:interchange_input_opt}
Let $\tilde{\sigma}$ be an input sequence. $\bar{\sigma}$ is another input sequence constructed by interchanging $g_\tau$ and $g_{\tau+1}$ in $\tilde{\sigma}$, for any selected $\tau\in[T]$. We have  
\begin{equation}
\label{eq:interchange_input_opt}
\eta_{OPT}\left(\tilde{\sigma}^{[1:\tau]}\right)-\eta_{OPT}\left(\tilde{\sigma}^{[1:\tau-1]}\right) \geq \eta_{OPT}\left(\bar{\sigma}^{[1:\tau+1]}\right)-\eta_{OPT}\left(\bar{\sigma}^{[1:\tau]}\right).
\end{equation}
\end{lem}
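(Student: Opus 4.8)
The plan is to exploit the fact that $\eta_{OPT}(\cdot)$ depends only on the \emph{set} of revenue functions supplied, not on their order, so that both sides of \eqref{eq:interchange_input_opt} are ``marginal values'' of inserting the single function $g_\tau$. Writing $A=\{g_1,\dots,g_{\tau-1}\}$, $x=g_\tau$, and $y=g_{\tau+1}$, the left-hand side equals $\eta_{OPT}(A\cup\{x\})-\eta_{OPT}(A)$ and, since $\bar{\sigma}^{[1:\tau+1]}=A\cup\{y,x\}$ and $\bar{\sigma}^{[1:\tau]}=A\cup\{y\}$, the right-hand side equals $\eta_{OPT}(A\cup\{y,x\})-\eta_{OPT}(A\cup\{y\})$. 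Thus \eqref{eq:interchange_input_opt} is exactly the statement that the marginal gain of adding $x$ shrinks when the base set is enlarged from $A$ to $A\cup\{y\}$, i.e. a diminishing-returns (submodularity) property of the offline optimum in the set of available functions. I would prove it in this strengthened single-insertion form, which is all the swap requires.

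To make the marginal value tractable I introduce the resource value function $G_S(\delta)\triangleq\max\{\sum_{g\in S}g(v_g):\sum_{g\in S}v_g\le\delta,\,v_g\ge0\}$, so that $\eta_{OPT}(S)=G_S(\Delta)$. Each $G_S$ is concave and nondecreasing, and I write $\lambda_S(s)\triangleq G_S'(s)$ for its (nonincreasing) marginal value of inventory at level $s$; this is the resource-level analogue of the scalar dual $\lambda^*$ of Proposition~\ref{thm:offline_solution}. Splitting the budget of $S\cup\{x\}$ between $x$ and the rest gives the optimal-split (sup-convolution) identity $G_{S\cup\{x\}}(\delta)=\max_{0\le\beta\le\delta}\big[G_S(\delta-\beta)+x(\beta)\big]$, from which the marginal value takes the clean form
\[
\eta_{OPT}(S\cup\{x\})-\eta_{OPT}(S)=\max_{0\le\beta\le\Delta}\Big[x(\beta)-\int_{\Delta-\beta}^{\Delta}\lambda_S(s)\,ds\Big].
\]

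The crux is a monotonicity of the marginal inventory price under set enlargement: for $B=A\cup\{y\}$ I would show $\lambda_A(s)\le\lambda_B(s)$ for every $s$. This follows from the same identity: at resource level $s$ the optimal split of $B$ devotes some $\beta^\star\ge0$ to $y$ and $s-\beta^\star\le s$ to $A$, so $\lambda_B(s)=\lambda_A(s-\beta^\star)\ge\lambda_A(s)$ because $\lambda_A$ is nonincreasing (the boundary case $\beta^\star=0$ gives equality). Given this, $\int_{\Delta-\beta}^{\Delta}\lambda_A(s)\,ds\le\int_{\Delta-\beta}^{\Delta}\lambda_B(s)\,ds$ for each $\beta$, hence the bracketed objective for $A$ dominates that for $B$ pointwise in $\beta$; taking the maximum over $\beta$ yields $\eta_{OPT}(A\cup\{x\})-\eta_{OPT}(A)\ge\eta_{OPT}(B\cup\{x\})-\eta_{OPT}(B)$, which is \eqref{eq:interchange_input_opt}.

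I expect the main obstacle to be this price-monotonicity step, together with the mild technical care it needs: $G_S$ need not be everywhere differentiable, so $\lambda_S$ should be read as a one-sided derivative (equivalently, the nonincreasing density whose integral recovers $G_S$), and one must verify the optimal-split claim at kinks and at the boundary $\beta\in\{0,\Delta\}$. It is worth noting that the scalar sandwich of Lemma~\ref{lem:opt_difference_bound} alone is \emph{not} enough here: it controls the marginal of $x$ only through the single duals $\lambda_{A\cup\{x\}}$ and $\lambda_{A\cup\{y\}}$ evaluated at the full budget, and these belong to incomparable sets, so no direct comparison results. Tracking the whole marginal-price curve $\lambda_S(\cdot)$, rather than its value at $\Delta$, is what makes the diminishing-returns comparison go through.
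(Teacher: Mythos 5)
Your proposal is correct, but it takes a genuinely different route from the paper's. You recast \eqref{eq:interchange_input_opt} as submodularity of the offline optimum viewed as a set function of the available revenue functions, and prove it globally by tracking the whole marginal-price curve: the sup-convolution identity $G_{S\cup\{x\}}(\delta)=\max_{0\le\beta\le\delta}[G_S(\delta-\beta)+x(\beta)]$, the pointwise monotonicity $\lambda_A(s)\le\lambda_{A\cup\{y\}}(s)$, and a pointwise comparison of the resulting bracketed objectives. The paper instead stays local and elementary: it only uses the two duals at the full budget, $\lambda_\tau$ (under $\tilde{\sigma}^{[1:\tau]}$) and $\bar{\lambda}_\tau$ (under $\bar{\sigma}^{[1:\tau]}$), and splits into cases. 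When $\lambda_\tau\le\bar{\lambda}_\tau$, it chains the lower bound of Lemma~\ref{lem:opt_difference_bound} for the left side of \eqref{eq:interchange_input_opt} with concavity of $g_\tau$ and the upper bound of Lemma~\ref{lem:opt_difference_bound} for the right side. When $\lambda_\tau\ge\bar{\lambda}_\tau$, it runs the symmetric chain for $g_{\tau+1}$ and then converts that inequality into \eqref{eq:interchange_input_opt} via the order-invariance identity $\eta_{OPT}(\bar{\sigma}^{[1:\tau+1]})-\eta_{OPT}(\bar{\sigma}^{[1:\tau-1]})=\eta_{OPT}(\tilde{\sigma}^{[1:\tau+1]})-\eta_{OPT}(\tilde{\sigma}^{[1:\tau-1]})$, which lets the two-step increments telescope. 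This means your closing remark—that the sandwich of Lemma~\ref{lem:opt_difference_bound} alone cannot suffice because $\lambda_{A\cup\{x\}}$ and $\lambda_{A\cup\{y\}}$ are incomparable—is not quite right: the case split plus the telescoping identity is precisely how the paper circumvents that incomparability. What your approach buys is a stronger, cleaner statement (diminishing returns in the set argument, at every inventory level, with no case analysis), which also iterates immediately to arbitrary set enlargements. What it costs is the technical hygiene you flag yourself: reading $\lambda_S$ as a one-sided derivative or monotone density, and justifying the envelope step $\lambda_B(s)=\lambda_A(s-\beta^\star)$ at kinks and at the corner $\beta^\star=s$ (where the correct statement is $\lambda_B(s)=\max\{\lambda_A(0),y'(s)\}$, which still dominates $\lambda_A(s)$, so your inequality survives); the paper's argument needs only the KKT facts already packaged in Lemma~\ref{lem:opt_difference_bound} and concavity.
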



\textred{The left- (resp. right-) hand-side of ~(\ref{eq:interchange_input_opt}) can be regarded as the increment $g_{\tau}$ contributes to  the offline optimal under $\tilde{\sigma}$ (resp. $\bar{\sigma}$). Inequality~(\ref{eq:interchange_input_opt}) means that moving $g_{\tau+1}$ ahead of $g_\tau$ (as under $\bar{\sigma}$) will not increase the contribution of $g_\tau$ to the offline optimal. }
\textred{Lemma \ref{lem:interchange_input_opt} basically states that regardless of the input sequence thus far, the impact or improvement in the offline optimal that $g_\tau$ brings at the time it appears in the input sequence has a ``diminishing effect'' in time. The proof of Lemma ~\ref{lem:interchange_input_opt} is essentially based on the bounds on the increment of offline optimal at each time in  Lemma ~\ref{lem:opt_difference_bound}. We leave the proof in Appendix~\ref{app:proof_interchange_input_opt}.}

\section{\textsf{CR-Pursuit} Algorithmic Framework}
\label{sec:online_alg_framework}




In this section, we present \textsf{CR-Pursuit}, a new algorithmic framework for solving OOIC under the online setting, where the interval length $T$ is not known beforehad and the revenue functions $g_t(\cdot)$, $t\in[T]$, are revealed in a slot-by-slot fashion.

\textsf{CR-Pursuit} is parameterized by a competitive ratio $\pi$, and chooses actions with the goal of ``pursuing'' this competitive ratio, i.e., maintaining the competitive ratio against the offline optimal of the previously observed revenue functions at all time. We derive bounds on the optimal competitive ratios and use them to operate \textsf{CR-Pursuit} accordingly.


In the following, we first present the \textsf{CR-Pursuit} framework and show that one can optimize the only parameter of \textsf{CR-Pursuit} to achieve the best possible competitive ratio, thus significantly reducing the search space of optimal online algorithms. Then, we identify a ``critical'' input sequence that highlights an important structural property of the space of input sequences. By applying \textsf{CR-Pursuit} to this critical sequence, we characterize a lower bound on the optimal competitive ratio as $\ln \theta +1$, where we recall that $\theta=M/m$ is the ratio between the maximum and minimum base prices. Then, for any other input, the performance ratio achieved by \textsf{CR-Pursuit} (with the same parameter) is upper bounded by the product of a problem-dependent factor and the lower bound. This structure not only suggests a principled approach for characterizing the optimal competitive ratio, but also immediately shows that \textsf{CR-Pursuit} (with a parameter being the product of the lower bound and the problem-dependent factor)  achieves the optimal competitive ratio (up to a problem-dependent factor) for solving OOIC among all deterministic algorithms.


\textbf{\textsf{CR-Pursuit}}. Recall that $\sigma^{[1:t]}=\{g_1,g_2,...,g_t\}$ is the input up to time $t$ and $\eta_{OPT}\left(\sigma^{[1:t]}\right)$ is the corresponding optimal offline revenue. The class of online algorithms that make up the \textsf{CR-Pursuit} framework, denoted as \textsf{CR-Pursuit}($\pi$) and presented in Alg. \ref{alg:Online-algorithm}, can be described as follows: Given any $\pi\ge1$, at the current time $t$,   \textsf{CR-Pursuit}($\pi$) outputs a $\bar{v}_t\in [0, \Delta]$ that satisfies
\begin{equation}
g_t\left(\bar{v}_t\right)=\frac{1}{\pi}\left[\eta_{OPT}\left(\sigma^{[1:t]}\right)-\eta_{OPT}\left(\sigma^{[1:t-1]}\right)\right].\label{eq:keep_cr}
\end{equation}
We remark that such $\bar{v}_t$ always exists, because (i) $g_t(\cdot)$ is a continuous and increasing function and (ii) the right-hand-side of \eqref{eq:keep_cr} is in $\left[g_t(0),g_t\left(\hat{v}_t\right)\right]$ according to Lemma~\ref{lem:opt_difference_bound}.


Essentially, \textsf{CR-Pursuit}$\left(\pi\right)$ aims at keeping the \textit{offline-to-online revenue ratio} to be $\pi>1$ at all time, i.e.,
\begin{equation}
    \sum_{\tau=1}^t g_{\tau}\left(\bar{v}_{\tau}\right) = \frac{1}{\pi} \eta_{OPT}\left(\sigma^{[1:t]}\right),\;\;\forall t\in[T]. \label{eq:CR-Pursuit:aggregate.revenue}
\end{equation}

While \textsf{CR-Pursuit}$\left(\pi\right)$ can be defined for any $\pi$, the solution obtained by \textsf{CR-Pursuit}$\left(\pi\right)$ may violate the inventory constraint in OOIC and be infeasible. This motivates the following definition.

\begin{algorithm}[!t]
\caption{\textsf{CR-Pursuit}($\pi$) Online algorithm \label{alg:Online-algorithm}}
\begin{algorithmic}[1]
\STATE \textbf{Input:} $\pi>1$, $\Delta$
\STATE \textbf{Output:} $\bar{v}_t,t\in [T]$
\WHILE{$t$ is not the last slot}
\STATE Obtain $\eta_{OPT}\left(\sigma^{[1:t]}\right)$ by solving the convex problem OOIC given the input until $t$, i.e., $\sigma^{[1:t]}$
\STATE Obtain a $\bar{v}_t \in [0,\Delta]$ that satisfies  \eqref{eq:keep_cr}
\ENDWHILE
\end{algorithmic}
\end{algorithm}

\begin{defn}
\textsf{CR-Pursuit}$\left(\pi\right)$ is feasible if $\Phi_\Delta\left(\pi\right)\leq \Delta$, where
\begin{align}
    \Phi_\Delta\left(\pi\right)\triangleq & \max_{\sigma \in \Sigma}\sum_{t=1}^{T}\bar{v}_t(\sigma),
\end{align}
and $\bar{v}_t(\sigma)$ is the output of \textsf{CR-Pursuit}$\left(\pi\right)$ at time $t$ under the input $\sigma$.
\end{defn}
If \textsf{CR-Pursuit}$\left(\pi\right)$ is feasible, i.e., it can maintain the
offline-to-online revenue ratio to be $\pi$ under all possible input sequences without violating the inventory constraint, then by definition it is $\pi$-competitive. We present a useful observation on $\Phi_\Delta\left(\pi\right)$. 

\begin{lem}
\label{lem:vt_decreasing}
$\Phi_\Delta\left(\pi\right)$ is strictly decreasing in $\pi$ over $[1,\infty)$.
\end{lem}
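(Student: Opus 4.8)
The plan is to reduce the statement to a \emph{slotwise} monotonicity property of the CR-Pursuit outputs and then lift that property through the maximization over inputs. First I would fix an arbitrary input sequence $\sigma$ and read off the defining relation \eqref{eq:keep_cr}. Writing $D_t(\sigma)\triangleq\eta_{OPT}\!\left(\sigma^{[1:t]}\right)-\eta_{OPT}\!\left(\sigma^{[1:t-1]}\right)\ge 0$ and displaying the $\pi$-dependence as $\bar v_t(\pi,\sigma)$, the output is the solution of $g_t\!\left(\bar v_t(\pi,\sigma)\right)=D_t(\sigma)/\pi$. By Lemma~\ref{lem:opt_difference_bound} the target revenue $D_t(\sigma)/\pi$ lies in $\left[g_t(0),g_t(\hat v_t)\right]$, so the solution can be taken on $[0,\hat v_t]$; there $g_t$ is concave with $g'_t(0)=p(t)\ge m>0$, hence $g'_t>0$ on $[0,\hat v_t)$ and $g_t$ is invertible. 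Thus $\bar v_t(\pi,\sigma)=g_t^{-1}\!\left(D_t(\sigma)/\pi\right)$ is uniquely pinned to this strictly increasing branch and, since $g_t^{-1}$ is strictly increasing, $\bar v_t(\pi,\sigma)$ is nonincreasing in $\pi$, strictly decreasing whenever $D_t(\sigma)>0$.

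Summing over $t$, for every fixed $\sigma$ the total $S(\pi,\sigma)\triangleq\sum_{t}\bar v_t(\pi,\sigma)$ is nonincreasing in $\pi$. I would anchor the \emph{strict} decrease at the first slot, which is clean: for a single-slot prefix the offline optimum is $\eta_{OPT}\!\left(\sigma^{[1:1]}\right)=g_1(\hat v_1)$, so $D_1(\sigma)=g_1(\hat v_1)>0$ and $\bar v_1(\pi,\sigma)=g_1^{-1}\!\left(g_1(\hat v_1)/\pi\right)$ is genuinely strictly decreasing in $\pi$ on $[1,\infty)$ (interior for $\pi>1$, equal to $\hat v_1$ at $\pi=1$). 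Hence for $1\le\pi_1<\pi_2$ and any fixed $\sigma$,
\[
S(\pi_1,\sigma)-S(\pi_2,\sigma)\ \ge\ \bar v_1(\pi_1,\sigma)-\bar v_1(\pi_2,\sigma)\ >\ 0,
\]
because every other slot contributes a nonnegative increment to the difference. So $S(\cdot,\sigma)$ is strictly decreasing for each $\sigma$.

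It remains to transfer strict monotonicity to $\Phi_\Delta(\pi)=\sup_{\sigma}S(\pi,\sigma)$, and I expect this to be the main obstacle: a family of pointwise strictly decreasing functions can have a supremum that is merely constant when the per-$\sigma$ gaps vanish along a maximizing sequence, so the pointwise argument does not lift for free. The way I would close it is to show the supremum is \emph{attained}. Fixing $\pi_1<\pi_2$ and letting $\sigma^\star$ attain $\Phi_\Delta(\pi_2)$, the slotwise gain above gives $\Phi_\Delta(\pi_1)\ge S(\pi_1,\sigma^\star)>S(\pi_2,\sigma^\star)=\Phi_\Delta(\pi_2)$, which is exactly the claim. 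To justify attainment I would exploit that appending a slot never changes the earlier outputs $\bar v_1,\dots,\bar v_t$ and only adds a nonnegative term, so the worst case is driven by extremal, monotone base-price configurations valued in the compact interval $[m,M]$ with per-slot functions from $\mathcal G$; combined with continuity of $\sigma\mapsto\eta_{OPT}\!\left(\sigma^{[1:t]}\right)$ (hence of $S(\pi,\cdot)$), this compactifies the effective search and yields a genuine maximizer rather than a vanishing-gap limit. Equivalently, one may instead characterize the worst-case input in closed form and read strict monotonicity off its explicit expression; either route turns the delicate sup step into the routine consequence of the slotwise argument established above.
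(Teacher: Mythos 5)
Your slotwise argument is correct and is, in substance, the paper's own justification: the paper proves this lemma only by the one-line remark that pursuing a smaller ratio forces a larger sale in every slot, which is exactly your observation that $\bar v_t(\pi,\sigma)=g_t^{-1}\!\left(D_t(\sigma)/\pi\right)$ is nonincreasing in $\pi$ (strictly decreasing when $D_t>0$), with the first slot anchoring strictness since $D_1(\sigma)=g_1(\hat v_1)>0$. You also correctly identify the real crux that the paper glosses over: pointwise strict decrease of $S(\pi,\sigma)$ does not automatically transfer to $\Phi_\Delta(\pi)=\sup_{\sigma}S(\pi,\sigma)$.

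The gap is in how you propose to close that crux. Your plan is to show the supremum is \emph{attained}, but this is false in general, and the paper itself exhibits the counterexample: in the classic one-way trading case (Sec.~\ref{OWT}), $\Phi_\Delta(\pi)=\frac{\Delta}{\pi}\left(1+\ln\theta\right)$ is reached only in the limit of increasing price sequences with $n\to\infty$ and $p_i\to p_{i+1}$; no finite input sequence attains it, and since $\Sigma$ ranges over sequences of arbitrary finite length there is no compactness of the input space to invoke. Your fallback, a closed-form characterization of the worst case, is likewise exactly what is unavailable for general concave $g_t$ (this is the stated reason the paper resorts to upper-bounding $\Phi_\Delta$ in Sec.~\ref{UB_CR}). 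Fortunately the lemma does not need attainment: strengthen your slotwise bound from a strict inequality to a \emph{uniform multiplicative} one. Since $g_t$ is concave with $g_t(0)=0$, we have $g_t(\alpha v)\geq\alpha g_t(v)$ for $\alpha\in[0,1]$, hence $g_t^{-1}(\alpha y)\leq\alpha g_t^{-1}(y)$ on the increasing branch. Taking $\alpha=\pi_1/\pi_2$ for $1\le\pi_1<\pi_2$ gives, for every $\sigma$ and every $t$, $\bar v_t(\pi_2,\sigma)=g_t^{-1}\!\left(\frac{\pi_1}{\pi_2}\cdot\frac{D_t(\sigma)}{\pi_1}\right)\leq\frac{\pi_1}{\pi_2}\,\bar v_t(\pi_1,\sigma)$, so $S(\pi_2,\sigma)\leq\frac{\pi_1}{\pi_2}S(\pi_1,\sigma)$ uniformly in $\sigma$. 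Taking suprema yields $\Phi_\Delta(\pi_2)\leq\frac{\pi_1}{\pi_2}\Phi_\Delta(\pi_1)<\Phi_\Delta(\pi_1)$, where the last step uses $\Phi_\Delta(\pi_1)>0$ (any single-slot input forces a positive sale, by your own first-slot computation). This passes strict monotonicity through the supremum with no attainment claim, and is the step your proposal is missing.
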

 Lemma \ref{lem:vt_decreasing} follows naturally since attempting to preserve a smaller competitive ratio requires selling a larger inventory to match the discounted revenue obtained by the offline optimal algorithm. It also implies that if \textsf{CR-Pursuit}$\left(\pi_1\right)$ is feasible for some $\pi_1$, then any online algorithm \textsf{CR-Pursuit}$\left(\pi\right)$ with $\pi\geq \pi_1$ is also feasible. Thus an upper bound on the optimal competitive ratio in this case gives a feasible competitive online algorithm.




\textbf{The Optimal Competitive Ratio.} We now present a key result, which says that it suffices to focus on \textsf{CR-Pursuit} for achieving the optimal competitive ratio.

\begin{thm}
\label{thm:optimal_cr2}
Let $\pi^*$ be the unique solution to the characteristic equation  $\Phi_\Delta\left(\pi\right)=\Delta$. Then \textsf{CR-Pursuit}($\pi^*$) is feasible and $\pi^*$ is the optimal competitive ratio of deterministic online algorithms.
\end{thm}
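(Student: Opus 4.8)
The plan is to prove the two halves of the statement separately: the \emph{achievability} direction, that CR-Pursuit($\pi^*$) is feasible and therefore $\pi^*$-competitive, and the \emph{optimality} direction, that no deterministic online algorithm can beat $\pi^*$. First I would pin down $\pi^*$ itself. By Lemma~\ref{lem:vt_decreasing}, $\Phi_\Delta(\cdot)$ is strictly decreasing on $[1,\infty)$; together with its continuity and boundary behaviour (it exceeds $\Delta$ as $\pi\downarrow 1$ and tends to $0$ as $\pi\to\infty$), the intermediate value theorem yields a \emph{unique} root $\pi^*$ of $\Phi_\Delta(\pi)=\Delta$. Since $\Phi_\Delta(\pi^*)=\Delta\le\Delta$, CR-Pursuit($\pi^*$) is feasible by definition, so under every input $\sigma\in\Sigma$ it maintains \eqref{eq:CR-Pursuit:aggregate.revenue} without ever exceeding $\Delta$; hence $\eta_{\mathcal A}(\sigma)=\tfrac{1}{\pi^*}\eta_{OPT}(\sigma)$ and it is exactly $\pi^*$-competitive. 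This disposes of achievability with essentially only the definitions and Lemma~\ref{lem:vt_decreasing}.

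For optimality I would argue by contradiction: suppose some deterministic online algorithm $\mathcal A$ is $\pi$-competitive with $\pi<\pi^*$. Because $T$ is unknown and the adversary may stop after any slot, being $\pi$-competitive forces the anytime guarantee $\sum_{\tau=1}^{t} g_\tau\!\left(v^{\mathcal A}_\tau\right)\ge\tfrac{1}{\pi}\,\eta_{OPT}\!\left(\sigma^{[1:t]}\right)$ at \emph{every} prefix of \emph{every} input, and by \eqref{eq:CR-Pursuit:aggregate.revenue} the right-hand side is precisely the cumulative revenue of CR-Pursuit($\pi$). Thus $\mathcal A$ dominates CR-Pursuit($\pi$) in cumulative revenue at all times. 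I would then feed $\mathcal A$ the maximizing sequence $\hat\sigma\in\Sigma$ attaining $\Phi_\Delta(\pi)=\sum_t \bar v_t(\hat\sigma)$, and, invoking Lemma~\ref{lem:interchange_input_opt}, assume without loss of generality that $\hat\sigma$ has non-decreasing base prices, since reordering toward increasing base prices only increases the total inventory CR-Pursuit must commit.

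The crux is a quantity-domination claim: on this sorted worst-case sequence, any algorithm meeting the revenue floor at every prefix must sell at least the total inventory that CR-Pursuit($\pi$) does. The intuition is that CR-Pursuit meets each incremental floor $\eta_{OPT}(\sigma^{[1:t]})-\eta_{OPT}(\sigma^{[1:t-1]})$ with equality at the current slot, which (base prices being non-decreasing) is the most revenue-efficient slot available; any competitor that banks surplus revenue must have generated it earlier at a strictly lower base price, which by concavity of $g_t$ costs strictly more inventory per unit revenue. Granting this claim, $\mathcal A$ sells at least $\Phi_\Delta(\pi)$ on $\hat\sigma$; but $\pi<\pi^*$ together with the strict monotonicity of Lemma~\ref{lem:vt_decreasing} gives $\Phi_\Delta(\pi)>\Phi_\Delta(\pi^*)=\Delta$, so $\mathcal A$ violates the inventory constraint~\eqref{eq:inventory_const} — a contradiction. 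Hence every deterministic algorithm has competitive ratio at least $\pi^*$, and combined with achievability, $\pi^*$ is optimal.

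The hard part will be making the quantity-domination claim rigorous, because cumulative-revenue dominance does not by itself imply cumulative-quantity dominance for a general online competitor. I expect to handle this by an exchange/induction argument that relies on three ingredients: (i) Lemma~\ref{lem:interchange_input_opt} to reduce to non-decreasing base-price sequences; (ii) the increment bounds of Lemma~\ref{lem:opt_difference_bound} to control the per-slot offline increment that defines each revenue floor; and (iii) the concavity and monotonicity of each $g_t$, which make selling earlier at lower base prices strictly inventory-inefficient. The delicate point to foreclose is the possibility that $\mathcal A$ overshoots the floor early and then coasts with smaller total inventory; the monotone-price reduction is exactly what rules this out, since any early overshoot is revenue purchased at an inferior price and hence at a larger inventory cost.
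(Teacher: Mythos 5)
Your achievability half is correct and matches the paper: feasibility of \textsf{CR-Pursuit}$(\pi^*)$ is immediate from $\Phi_\Delta(\pi^*)=\Delta$ and the definition of feasibility, and the anytime-floor observation (that a $\pi$-competitive algorithm must satisfy $\sum_{\tau\le t}g_\tau(v^{\mathcal A}_\tau)\ge \frac{1}{\pi}\eta_{OPT}(\sigma^{[1:t]})$ on every prefix, because the adversary may stop at any slot) is a valid and useful reformulation. The gap is in the optimality half, and it occurs at exactly the point you flag as ``the hard part.'' First, your without-loss-of-generality reduction is mis-stated: you sort the worst-case sequence by \emph{base prices} $p(t)=g_t'(0)$ and attribute this to Lemma~\ref{lem:interchange_input_opt}. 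What is actually needed --- and what the paper proves as Lemma~\ref{lem:increasing_marginal_worstcase}, using Lemma~\ref{lem:interchange_input_opt} inside a nontrivial swap argument on \textsf{CR-Pursuit}'s own outputs --- is that there is a worst-case sequence along which the \emph{marginals at the algorithm's realized quantities}, $g_t'(\bar v_t)$, are non-decreasing in $t$. For general concave $g_t$ these two orderings are different (they coincide only in the linear one-way-trading case), and your supporting intuition that ``a lower base price costs strictly more inventory per unit revenue'' is precisely the linear-case intuition; it does not control the quantities $\bar v_t$ at which concave functions are actually evaluated. Lemma~\ref{lem:interchange_input_opt} by itself is a statement about offline-optimum increments and does not yield your WLOG.

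Second, the quantity-domination claim that you defer is the entire technical content of the theorem, and the mechanism that proves it is absent from your sketch. The paper does not use exchange/induction on a sorted sequence at this stage; it uses an adversarial \emph{stopping time}: feed $\mathcal A$ the worst-case sequence of Lemma~\ref{lem:increasing_marginal_worstcase} and stop at the first slot $\tau$ at which $\mathcal A$'s cumulative quantity falls to or below \textsf{CR-Pursuit}'s (such a $\tau$ must exist, else $\mathcal A$ violates the inventory constraint, since \textsf{CR-Pursuit}$(\pi^*)$ sells exactly $\Delta$ there). Then concavity gives
\begin{equation*}
\sum_{\xi=1}^{\tau}\left[\tilde g_\xi\left(v^{\mathcal A}_\xi\right)-\tilde g_\xi\left(\bar v_\xi\right)\right]\le\sum_{\xi=1}^{\tau}\tilde g_\xi'\left(\bar v_\xi\right)\left(v^{\mathcal A}_\xi-\bar v_\xi\right),
\end{equation*}
and summation by parts (Abel summation) turns the right-hand side into a term multiplying the cumulative gap at $\tau$ (non-positive by the choice of $\tau$) minus a sum of products of marginal increments (non-negative by Lemma~\ref{lem:increasing_marginal_worstcase}) with earlier cumulative gaps (positive before $\tau$); hence $\mathcal A$'s revenue at $\tau$ is at most \textsf{CR-Pursuit}$(\pi^*)$'s, so its ratio is at least $\pi^*$. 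Note the paper runs this argument directly with $\pi^*$ and only needs the non-strict conclusion; your contradiction route (comparing against the infeasible benchmark \textsf{CR-Pursuit}$(\pi)$ with $\pi<\pi^*$ and invoking Lemma~\ref{lem:vt_decreasing} to get $\Phi_\Delta(\pi)>\Delta$) is workable in spirit, but to contradict the anytime floor it needs a \emph{strict} revenue deficit at $\tau$, which the Abel bound delivers only when $\tilde g_\tau'(\bar v_\tau)>0$ --- an extra edge case the paper's formulation avoids. So: right skeleton, but the two load-bearing ingredients (the correct monotone-marginal worst-case structure, and the stopping-time-plus-summation-by-parts estimate) are missing or replaced by claims that fail for general concave revenue functions.
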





Before we proceed to prove Theorem \ref{thm:optimal_cr2}, we first present the following lemma characterizing a class of worst case inputs for \textsf{CR-Pursuit}. The results will be used in the proof of Theorem~\ref{thm:optimal_cr2}.

\begin{lem}
\label{lem:increasing_marginal_worstcase}
For any \textsf{CR-Pursuit}($\pi$), there exists an input sequence $\sigma$ such that (i) \textsf{CR-Pursuit}($\pi$) sells exactly the $\Phi_{\Delta}\left(\pi\right)$ amount of inventory and (ii) $g_t'\left(\bar{v}_t(\sigma)\right)$ is non-decreasing in $t$.
\end{lem}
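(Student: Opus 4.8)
The plan is to prove the lemma by an exchange (adjacent-transposition) argument. I would start from a worst-case input that attains $\Phi_\Delta(\pi)$, and then repeatedly interchange any two adjacent slots whose online marginal revenues are out of order, showing that each such interchange keeps the total sold inventory pinned at its maximum value while making progress toward a sorted ordering. The target is a sequence for which $g_t'(\bar v_t)$ is non-decreasing, which will simultaneously give property (ii), while property (i) is inherited from the fact that we never leave the set of worst-case inputs.

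Concretely, fix a sequence $\sigma$ attaining the maximum in the definition of $\Phi_\Delta(\pi)$, and suppose $g_t'(\bar v_t)$ is not non-decreasing, so there is an index $\tau$ with $g_\tau'(\bar v_\tau) > g_{\tau+1}'(\bar v_{\tau+1})$. Let $\bar\sigma$ be obtained from $\sigma$ by interchanging $g_\tau$ and $g_{\tau+1}$. Since $\eta_{OPT}$ depends only on the multiset of revenue functions, the value $\eta_{OPT}(\cdot^{[1:s]})$ changes only at $s=\tau$; hence $\bar v_s$ is unchanged for every $s\notin\{\tau,\tau+1\}$, and the total two-slot increment $\eta_{OPT}(\cdot^{[1:\tau+1]})-\eta_{OPT}(\cdot^{[1:\tau-1]})$ is preserved by the swap. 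Writing $A,B$ for the offline increments contributed by $g_\tau,g_{\tau+1}$ in the original order and $A',B'$ for the corresponding increments after the swap, defining relation \eqref{eq:keep_cr} gives $\bar v_\tau=g_\tau^{-1}(A/\pi)$ and $\bar v_{\tau+1}=g_{\tau+1}^{-1}(B/\pi)$, while preservation of the total increment gives $A+B=A'+B'$ and Lemma~\ref{lem:interchange_input_opt} gives $A\ge A'$, so that $\delta:=A-A'=B'-B\ge 0$.

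The technical heart is to compare the two-slot inventory $I$ before the swap with $I'$ after. Because each $g_t$ is concave and increasing, $g_t^{-1}$ is convex and increasing with $(g_t^{-1})'(y)=1/g_t'(g_t^{-1}(y))$ increasing in $y$; bounding the two resulting increments by monotonicity of $(g_t^{-1})'$ at their extreme endpoints yields
\begin{equation*}
I'-I \;\ge\; \frac{\delta}{\pi}\left[\frac{1}{g_{\tau+1}'(\bar v_{\tau+1})}-\frac{1}{g_{\tau}'(\bar v_{\tau})}\right],
\end{equation*}
and the out-of-order hypothesis $g_\tau'(\bar v_\tau)>g_{\tau+1}'(\bar v_{\tau+1})$ makes the bracket strictly positive. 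I expect termination to be the main obstacle, since the online marginal revenues are not fixed sorting keys but vary with position through the offline increments, so a naive inversion count need not decrease. The resolution is to exploit maximality: if $\delta>0$ the displayed inequality would force $I'>I=\Phi_\Delta(\pi)$, contradicting that $\sigma$ attains the maximum; hence $\delta=0$, i.e.\ $A'=A$ and $B'=B$. With $\delta=0$ the swap merely transposes the two values $\bar v_\tau,\bar v_{\tau+1}$ and their marginal revenues into sorted order, leaving the total inventory and the entire multiset of marginal revenues unchanged. Each fixing swap therefore acts on a now-fixed multiset of keys and strictly reduces the number of inverted pairs $\{i<j:\,g_i'(\bar v_i)>g_j'(\bar v_j)\}$ by one, so the standard bubble-sort bound guarantees termination after finitely many swaps at a worst-case sequence with $g_t'(\bar v_t)$ non-decreasing, establishing both (i) and (ii).
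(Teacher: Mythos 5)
Your proposal is correct and follows essentially the same route as the paper's own proof: an adjacent-swap argument on a worst-case input, using Lemma~\ref{lem:interchange_input_opt} to sign the shift $\delta$ in the two offline increments, then invoking maximality of $\Phi_\Delta(\pi)$ together with a gradient inequality to force $\delta=0$, so that each swap is a pure transposition and sorting terminates. Your convexity bound on $g_t^{-1}$ is exactly the paper's concavity inequalities for $g_\tau$ and $g_{\tau+1}$ (its steps (a)--(b) showing $v_\tau+v_{\tau+1}<\bar v_\tau+\bar v_{\tau+1}$ when $\delta>0$) rewritten in terms of inverse functions, the only substantive additions being your explicit inversion-count termination argument, which the paper leaves implicit.
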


\textred{Lemma  \ref{lem:increasing_marginal_worstcase} states that to compute $\Phi_{\Delta}\left(\pi\right)$, it is sufficient to focus on the input sequences that will lead to a non-increasing sequence of marginal revenue $g_t'\left(\bar{v}_t\right)$ at the solution obtained by \textsf{CR-Pursuit}($\pi$).
Intuitively, these sequences will cause the \textsf{CR-Pursuit}($\pi$) algorithm to sell large quantities at lower prices in the early slots, without knowing that the marginal revenues at later slots are higher, which is exploited by the offline optimal solution. As a result, \textsf{CR-Pursuit}($\pi$) will need to sell the ``worst'' amount of inventory to keep the revenue ratio $\pi$.

The proof of Lemma \ref{lem:increasing_marginal_worstcase} is provided in Appendix~\ref{app:proof_increasing_marginal_worstcase}, based on the subtle yet important property of the offline optimal aggregate revenue in Lemma~\ref{lem:interchange_input_opt}. The idea of the proof, roughly speaking, is that if the worst case input sequence is not as stated, then we can swap revenue functions within the sequence to construct a new worst case one that satisfies the conditions.} Putting the preceding lemmas together, we are now ready to prove Theorem \ref{thm:optimal_cr2}.

\begin{proof}[Proof of Theorem \ref{thm:optimal_cr2}]
The feasibility of \textsf{CR-Pursuit}$\left(\pi^{*}\right)$ is because of the definition of $\pi^*$. What remains to be proved is that $\pi^{*}$ is the optimal competitive ratio.

Consider an arbitrary deterministic online algorithm different from \textsf{CR-Pursuit}$\left(\pi^{*}\right)$, denoted as $\mathcal{A}$. We will show that $\mathcal{A}$ cannot achieve an offline-to-online revenue ratio smaller than $\pi^{*}$ over an input sequence that we construct.

Let $\tilde{\sigma}^{[1:T]}=\{\tilde{g}_1,\tilde{g}_2,...,\tilde{g}_T\}$ be a worst case input sequence of \textsf{CR-Pursuit}$\left(\pi^{*}\right)$ that satisfies the conditions in Lemma~\ref{lem:increasing_marginal_worstcase}. Let $\bar{v}_t$ and $v^{\mathcal{A}}_t$ be the corresponding solutions of \textsf{CR-Pursuit}$\left(\pi^{*}\right)$ and $\mathcal{A}$ at time $t$, respectively. We have
\begin{itemize}
    \item $\sum_{t=1}^{T}\bar{v}_t=\Phi_\Delta\left(\pi^*\right)=\Delta$;
    \item $\tilde{g}_t'\left(\bar{v}_t\right)$ is non-decreasing in $t$.
\end{itemize}

We now construct an input sequence over which $\mathcal{A}$ cannot achieve an offline-to-online revenue ratio smaller than $\pi^{*}$, by feeding $\tilde{g}_1,\tilde{g}_2,...,\tilde{g}_T$ to $\mathcal{A}$ and stop at any time that we need.

We first present $\tilde{g}_1$ to $\mathcal{A}$ in the first slot. If
$v^{\mathcal{A}}_1\le \bar{v}_1$, we stop and set $T=1$ in this constructed sequence. In this case, we have
\[
\tilde{g}_1\left(v^{\mathcal{A}}_1\right)\le \tilde{g}_1\left(\bar{v}_1\right)=\frac{1}{\pi^*}\eta_{OPT}\left(\tilde{\sigma}^{[1:1]}\right),
\]
thus the competitive ratio of $\mathcal{A}$ is at least $\pi^{*}$. Otherwise we have $v^{\mathcal{A}}_1>\bar{v}_1$ and we continue to present $\tilde{g}_2$ to $\mathcal{A}$ in the second slot.

In general, if at time $t$ the total selling quantity of $\mathcal{A}$ so far is no larger
than that of \textsf{CR-Pursuit}$\left(\pi^{*}\right)$, i.e., $\sum_{\tau=1}^{t}\bar{v}_\tau$, we end the trading period. Otherwise, we continue to the $t+1$ slot and present $\mathcal{A}$ with the revenue function $\tilde{g}_{t+1}(\cdot)$.

Let $\tau$ be the earliest slot such that at the end of time $\tau$, the total selling quantity of $\mathcal{A}$ is less than that of \textsf{CR-Pursuit}$\left(\pi^{*}\right)$. Such $\tau$ exists; otherwise, we will have $\sum_{t=1}^{T}v^{\mathcal{A}}_t>\sum_{t=1}^{T}\bar{v}_t=\Delta$, which implies that $\mathcal{A}$ is not feasible. Given such $\tau\in[T]$, we have
\begin{align}
    \sum_{\xi=1}^{t}v^{\mathcal{A}}_\xi  & > \sum_{\xi=1}^{t}\bar{v}_\xi, \forall t\in [\tau-1], \label{eq:quantity.relationship.before.tau}\\
    \mbox{and }\;\;\;\sum_{\xi=1}^{\tau}v^{\mathcal{A}}_\xi & \le\sum_{\xi=1}^{\tau}\bar{v}_\xi.\label{eq:quantity.relationship.at.tau}
\end{align}

We now show that, for the input sequence $\tilde{\sigma}^{[1:\tau]}$, the aggregate revenue of $\mathcal{A}$ is no larger than that of \textsf{CR-Pursuit}$\left(\pi^*\right)$, i.e.,
\begin{equation}
\sum_{\xi=1}^{\tau}\tilde{g}_\xi\left(v^{\mathcal{A}}_\xi\right) -  \sum_{\xi=1}^{\tau}{\tilde{g}_\xi\left(\bar{v}_\xi\right)}\leq 0, \label{eq:revenue_A_leq_revenue_CR-Pursuit}
\end{equation}
which then implies that the online algorithm $\mathcal{A}$ is at best $\pi^*$-competitive. By the concavity of $\tilde{g}_t(\cdot)$, we have
\begin{equation*}
\hspace*{-.55\columnwidth}  \sum_{\xi=1}^{\tau}\left[\tilde{g}_\xi\left(v^{\mathcal{A}}_\xi\right)-\tilde{g}_\xi\left(\bar{v}_\xi\right)\right]
\end{equation*}
\vspace*{-2mm}
\begin{align*}
  \hspace*{.15\columnwidth} \leq\;\; &  \sum_{\xi=1}^{\tau}\tilde{g}'_\xi\left(\bar{v}_\xi\right)\left(v^{\mathcal{A}}_\xi-\bar{v}_\xi\right) \\
    =\;\; &  \tilde{g}'_\tau\left(\bar{v}_\tau\right)\left(\sum_{\xi=1}^{\tau}v^{\mathcal{A}}_\xi-\sum_{\xi=1}^{\tau}\bar{v}_\xi\right) \\
     & - \sum_{t=1}^{\tau-1} \left[\tilde{g}'_{t+1}\left(\bar{v}_{t+1}\right) - \tilde{g}'_{t}\left(\bar{v}_{t}\right)\right] \left(\sum_{\xi=1}^{t}v^{\mathcal{A}}_\xi-\sum_{\xi=1}^{t}\bar{v}_\xi\right).
\end{align*}
By \eqref{eq:quantity.relationship.at.tau} and that $\tilde{g}'_\tau\left(\bar{v}_\tau\right)\geq 0$ as $\tilde{g}_\tau(\cdot)$ is an increasing function, the first term in the last line of derivation is non-positive. By \eqref{eq:quantity.relationship.before.tau} and that $\tilde{g}_t'\left(\bar{v}_t\right)$ is non-decreasing in $t$, each term in the summation in the last line of derivation is non-negative. As such, the right-hand-side is non-positive and the inequality in \eqref{eq:revenue_A_leq_revenue_CR-Pursuit} holds.
\end{proof}

\section{Competitive Analysis of \textsf{CR-Pursuit}\label{UB_CR}}

The results in the previous section highlight a principled approach to construct an optimal online algorithm. Specifically, the first step is to mathematically characterize $\Phi_\Delta\left(\pi\right)$. Then we solve the characteristic equation $\Phi_\Delta\left(\pi\right)=\Delta$ to obtain the optimal competitive ratio $\pi^*$, and  \textsf{CR-Pursuit}($\pi^*$) is an optimal online algorithm for solving OOIC. For special cases such as the one-way trading problem \cite{el2001optimal} where $g_t\left(v\right)=p(t)\cdot v$, we can obtain the closed-form expression of $\Phi_\Delta\left(\pi\right)$ and compute the optimal competitive ratio (as demonstrated in Sec.~\ref{OWT}). However, it is difficult to obtain a closed-form expression for general concave revenue functions. Instead, we characterize an upper bound on $\Phi_\Delta\left(\pi\right)$, based on which we can give an upper bound on the optimal competitive ratio $\pi^*$ and consequently a feasible online algorithm.

Before moving to the upper bound though, it is helpful to understand a lower bound on the optimal competitive ratio.  For this, we can simply refer to the literature on one-way trading.  In particular, it has been shown that the optimal competitive ratio of the classic one-way trading problem is $\ln \theta +1$~\cite{el2001optimal,yang2017online}.  Since OOIC covers one-way trading as a special case, the optimal competitive ratio for any online algorithm solving OOIC is lower bounded by $\ln\theta+1$. Interestingly, it is possible to interpret this bound in the context of the \textsf{CR-Pursuit} framework.  In particular, in Sec.~\ref{OWT}.  we identify the worst case input in one-way trading (defined in Sec.~\ref{OWT}) as a ``critical" input sequence, reflecting an interesting structure on the space of input sequences. By applying \textsf{CR-Pursuit} to this sequence, we characterize a lower bound on the optimal competitive ratio as $\ln\theta+1$. 

It turns out that for any other inputs, the performance ratio achieved by \textsf{CR-Pursuit} is upper bounded by the product of a problem-dependent factor and the lower bound $\ln\theta+1$. This insight leads to the following results.
\begin{thm}
\label{thm:UB_OOIC} Recall that $\mathcal{G}$ is the set of all possible $g(\cdot)$ and $\hat{v}\in[0,\Delta]$ is the maximizer of $g(\cdot)$. Let $c=\sup_{g\in\mathcal{G}}\frac{g'\left(0\right)}{g\left(\hat{v}\right)/\hat{v}}$, then
the optimal competitive ratio $\pi^*$ satisfies
\[
    \ln\theta +1 \leq \pi^* \leq c\left(\ln\theta +1\right).
\]
\end{thm}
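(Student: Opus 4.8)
The plan is to prove the two inequalities separately. For the lower bound $\pi^*\ge\ln\theta+1$, I would invoke the fact recalled just before the statement that OOIC contains classical one-way trading as the special case $g_t(v)=p(t)v$ (for which $c=1$), where no deterministic online algorithm beats $\ln\theta+1$. Since every deterministic algorithm for OOIC is in particular an algorithm on the one-way-trading sub-family of inputs, its worst-case ratio over $\Sigma$ is at least its worst-case ratio over that sub-family, which is at least $\ln\theta+1$; hence $\pi^*\ge\ln\theta+1$.

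For the upper bound, by Theorem~\ref{thm:optimal_cr2} together with the monotonicity of $\Phi_\Delta(\cdot)$ in Lemma~\ref{lem:vt_decreasing}, it suffices to show that \textsf{CR-Pursuit}$(\pi_0)$ is feasible for $\pi_0=c(\ln\theta+1)$, i.e.\ that $\Phi_\Delta(\pi_0)\le\Delta$. Write $R_t=\eta_{OPT}(\sigma^{[1:t]})$ and $R_0=0$. The first ingredient is a per-slot bound on the output $\bar v_t$. Since $g_t$ is concave with $g_t(0)=0$, the average slope $g_t(v)/v$ is nonincreasing, so $g_t(\bar v_t)/\bar v_t\ge g_t(\hat v_t)/\hat v_t$ because $\bar v_t\le\Delta=\hat v_t$; the definition of $c$ gives $g_t(\hat v_t)/\hat v_t\ge p(t)/c$, and the defining identity \eqref{eq:keep_cr} gives $g_t(\bar v_t)=(R_t-R_{t-1})/\pi$. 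Combining these for every input sequence yields
\[
\bar v_t\;\le\;\frac{c}{\pi}\cdot\frac{R_t-R_{t-1}}{p(t)}.
\]
Summing over $t$ reduces feasibility to the \emph{key estimate} $\sum_{t=1}^{T}\frac{R_t-R_{t-1}}{p(t)}\le\Delta(\ln\theta+1)$, valid for every input; granting it, $\Phi_\Delta(\pi_0)\le\frac{c}{\pi_0}\Delta(\ln\theta+1)=\Delta$, which closes the argument.

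It remains to establish the key estimate, the heart of the proof. I would first reduce to the case of non-decreasing base prices $p(1)\le\cdots\le p(T)$ by a bubble-sort argument: removing an adjacent inversion never decreases the left-hand side. Only two terms change under a swap of adjacent functions $A$ (base price $p_A$) and $B$ (base price $p_B$); letting $a,a'$ be the offline-revenue contribution of $A$ when placed first versus second and $b,b'$ likewise for $B$, the conservation identity $a+b'=b+a'$ (both equal the joint contribution of $\{A,B\}$ to the common prefix) reduces the change to $(a-a')\bigl(\frac{1}{p_A}-\frac{1}{p_B}\bigr)$. Lemma~\ref{lem:interchange_input_opt} gives $a\ge a'$ (a function contributes more when it appears earlier), and removing the inversion means $p_A\le p_B$, so the product is non-negative. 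Thus the left-hand side is maximized at the sorted order, and it suffices to bound it there.

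For non-decreasing prices, $\max_{s\le t}p(s)=p(t)$, so the crude revenue bound $R_t\le\Delta\max_{s\le t}p(s)$ (which follows from $g_s(v)\le p(s)v$ and the inventory constraint) becomes $R_t\le\Delta p(t)$. Applying summation-by-parts, and using that the coefficients $\frac{1}{p(t)}-\frac{1}{p(t+1)}$ are now non-negative, I would bound
\[
\sum_{t=1}^{T}\frac{R_t-R_{t-1}}{p(t)}=\frac{R_T}{p(T)}+\sum_{t=1}^{T-1}R_t\Bigl(\frac{1}{p(t)}-\frac{1}{p(t+1)}\Bigr)\le\Delta+\Delta\sum_{t=1}^{T-1}\frac{p(t+1)-p(t)}{p(t+1)},
\]
and then use $\frac{x-y}{x}\le\ln\frac{x}{y}$ to telescope $\sum_{t}\frac{p(t+1)-p(t)}{p(t+1)}\le\ln\frac{p(T)}{p(1)}\le\ln(M/m)=\ln\theta$, giving $\Delta(\ln\theta+1)$. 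The main obstacle is the key estimate itself: a naive per-slot telescoping fails, since the first unit of inventory committed costs a full additive $\Delta$ that no logarithmic term can absorb, so the monotone reordering via Lemma~\ref{lem:interchange_input_opt} followed by a \emph{global} Abel summation (which lets $R_t\le\Delta p(t)$ be applied only where its coefficient is favorably signed) is essential; checking the sign in the swap step through the conservation identity is the one delicate point.
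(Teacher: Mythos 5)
Your proof is correct, and while its skeleton coincides with the paper's (lower bound by restriction to the one-way-trading sub-family; upper bound by combining the per-slot estimate $\bar v_t\le c\,g_t(\bar v_t)/p(t)$, which is exactly the paper's Lemma~\ref{lem:upper_bound_v_t_OOIC}, with the key estimate $\sum_{t}(R_t-R_{t-1})/p(t)\le\Delta(\ln\theta+1)$, which is the paper's Lemma~\ref{lem:to_goal_OOIC} after substituting $g_t(\bar v_t)=(R_t-R_{t-1})/\pi$), your proof of the key estimate takes a genuinely different route. The paper first establishes the threshold bound of Lemma~\ref{lem:bound_on_x_OOIC}, namely $\sum_{t:\,p(t)\le p}g_t(\bar v_t)\le p\Delta/\pi$ for every threshold $p$, via an interchange argument resting on Lemma~\ref{lem:interchange_input_opt}, and then feeds these thresholds as packing constraints into an auxiliary convex program over the revenue mass $y_i$ at each price level, solved explicitly by KKT to give $y_i=p_i-p_{i-1}$ and the bound $\sum_i(p_i-p_{i-1})/p_i\le 1+\ln\theta$. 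You instead apply Lemma~\ref{lem:interchange_input_opt} directly to the target sum: the conservation identity $a+b'=b+a'$ together with $a\ge a'$ shows that each inversion-removing adjacent swap cannot decrease $\sum_t(R_t-R_{t-1})/p(t)$ (only the two swapped terms change, since the offline optimum depends only on the multiset of revenue functions), so it suffices to bound sorted sequences, where $R_t\le\Delta p(t)$, Abel summation, and $\frac{x-y}{x}\le\ln\frac{x}{y}$ give $\Delta(\ln\theta+1)$. Your route avoids the auxiliary optimization and its KKT analysis entirely and is more elementary; the paper's threshold formulation buys a standalone, reusable constraint family (Lemma~\ref{lem:bound_on_x_OOIC}) that cleanly quantifies how much revenue the adversary can extract below any price level. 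Both arguments ultimately lean on the same structural fact, Lemma~\ref{lem:interchange_input_opt}.

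One small caveat: you justify $\bar v_t\le\hat v_t$ by writing $\bar v_t\le\Delta=\hat v_t$. This is valid under the stated OOIC assumption that $g_t$ is increasing on $[0,\Delta]$, but the constant $c$ is defined with $\hat v$ a possibly interior maximizer precisely so that the same bound extends to families like the price-elasticity revenue functions (where $c=2$ and $\hat v_t<\Delta$). The robust justification, as in the paper's proof of Lemma~\ref{lem:upper_bound_v_t_OOIC}, is that $\pi g_t(\bar v_t)=R_t-R_{t-1}\le g_t(\hat v_t)$ by Lemma~\ref{lem:opt_difference_bound}, so $\bar v_t$ may be taken in $[0,\hat v_t]$; your average-slope argument then goes through unchanged.
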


\textred{Theorem~\ref{thm:UB_OOIC} characterizes an upper bound on the optimal competitive ratio in the case for general revenue functions $g_t$, and also implies that \textsf{CR-Pursuit}$\left(c\left(\ln \theta +1\right)\right)$ is feasible and its competitive ratio is $c\left(\ln \theta +1\right)$.} Note that $c$ is a constant that depends on the gradient properties (in particular the base price) and the maximizers of the revenue functions\footnote{While $c$ is a constant when the family of revenue functions are fixed, it is indeed true that $c$ could presumably be driven to be infinitely large, e.g., with revenue functions that are concave and increasing. This parameter $c$ can be seen in an economical sense as a comparison between the base price and the average price at the maximizer of the function. Since the former is already bounded in $[m,M]$, we look at the case when the latter is small. These situations are hard to derive any interesting online optimization as the functions require too much commitment even in bad time epochs, and have low average prices. This results in low committed average prices while the offline optimal may eventually not have to participate in these time epochs. }. For many interesting problems, this $c$ is bounded and small. For example, for the one-way trading problem where the revenue functions are linear, i.e., $g_t\left(v\right)=p\left(t\right) v,\forall t\in [T]$, we have $c=1$. As another example, for the one-way trading with linear price elasticity where the revenue functions are quadratic, i.e., $g_t\left(v\right)=\left(p\left(t\right)-\alpha_tv\right)v,\forall t\in [T]$, we have $c=2$.


To prove this theorem, we use a sequence of lemmas elaborated as follows.
\textred{We begin with Lemma~\ref{lem:upper_bound_v_t_OOIC}, which gives an upper bound on the total selling quantity by \textsf{CR-Pursuit}$\left(\pi\right)$ in each time slot to maintain the offline-to-online revenue ratio.
Recall that the output of the algorithm \textsf{CR-Pursuit}$\left(\pi\right)$ at slot $t$, $g_t(\bar{v}_t)$ is given in \eqref{eq:keep_cr}, and $p\left(t\right)=g_{t}'\left(0\right)$ is the base price at slot $t$.}

\begin{lem}
\label{lem:upper_bound_v_t_OOIC}
For any input sequence $\sigma$, we have
$$\bar{v}_{t}\leq c \frac{g_t(\bar{v}_t)}{p(t)},\forall t\in [T].$$
\end{lem}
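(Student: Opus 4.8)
The plan is to reduce the claim to a static property of $g_t$ evaluated at the single point $\bar v_t$. Assuming $\bar v_t>0$ (the case $\bar v_t=0$ is immediate, since then both sides vanish), I would divide the target inequality by $g_t(\bar v_t)/p(t)>0$, so that it becomes equivalent to
\[
\frac{g_t'(0)}{g_t(\bar v_t)/\bar v_t}\le c.
\]
Because $c=\sup_{g\in\mathcal G} g'(0)\big/\big(g(\hat v)/\hat v\big)$ and $g_t\in\mathcal G$, the quantity $g_t'(0)\big/\big(g_t(\hat v_t)/\hat v_t\big)$ is one of the terms in the supremum defining $c$ and hence is at most $c$. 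So it suffices to compare the ``average price'' at $\bar v_t$ with that at the maximizer $\hat v_t$, and to show the former is at least the latter.

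First I would record the elementary fact that for a concave $g_t$ with $g_t(0)=0$ the average $g_t(v)/v$ is non-increasing in $v$ over $(0,\Delta]$. This follows from the chord inequality, or equivalently from $\tfrac{d}{dv}\big(g_t(v)/v\big)=\big(g_t'(v)v-g_t(v)\big)/v^2\le 0$, where $g_t(v)\ge v\,g_t'(v)$ by concavity and $g_t(0)=0$. An immediate consequence is that whenever $\bar v_t\le\hat v_t$ we have $g_t(\bar v_t)/\bar v_t\ge g_t(\hat v_t)/\hat v_t$, which is exactly the comparison needed above. Thus the whole statement hinges on the single structural fact that $\bar v_t\le\hat v_t$.

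Next I would establish $\bar v_t\le\hat v_t$ from the definition of \textsf{CR-Pursuit}. By \eqref{eq:keep_cr}, $g_t(\bar v_t)=\tfrac1\pi\big[\eta_{OPT}(\sigma^{[1:t]})-\eta_{OPT}(\sigma^{[1:t-1]})\big]$, and the upper bound in \eqref{eq:off_Optimal_value_increment_upper_bound} of Lemma~\ref{lem:opt_difference_bound} shows that the bracketed increment is at most $g_t(\hat v_t)$. Since $\pi\ge 1$ and $g_t(\hat v_t)\ge 0$, this yields $g_t(\bar v_t)\le g_t(\hat v_t)$. As $g_t$ is increasing on $[0,\hat v_t]$ and $\bar v_t$ is taken to be the point in $[0,\hat v_t]$ realizing the target revenue on the right-hand side of \eqref{eq:keep_cr}, we conclude $\bar v_t\le\hat v_t$. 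Combining with the monotonicity of the average then gives $g_t'(0)\big/\big(g_t(\bar v_t)/\bar v_t\big)\le g_t'(0)\big/\big(g_t(\hat v_t)/\hat v_t\big)\le c$, and multiplying through by $g_t(\bar v_t)/p(t)$ recovers $\bar v_t\le c\,g_t(\bar v_t)/p(t)$.

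The hard part will be the step $\bar v_t\le\hat v_t$: all the other ingredients are routine, but pinning down that the \textsf{CR-Pursuit} output lies on the increasing branch of $g_t$ (rather than past its maximizer) is where the argument must be careful. This is precisely where I would lean on invoking the correct (upper) increment bound of Lemma~\ref{lem:opt_difference_bound} together with $\pi\ge 1$, and on the convention that \textsf{CR-Pursuit} selects the smallest feasible $\bar v_t$ consistent with \eqref{eq:keep_cr}.
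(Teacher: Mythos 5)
Your proposal is correct and follows essentially the same route as the paper's proof: both deduce $\bar{v}_t\leq\hat{v}_t$ from the upper bound in Lemma~\ref{lem:opt_difference_bound}, then use concavity with $g_t(0)=0$ (your monotonicity of the average $g_t(v)/v$ is exactly the paper's chord inequality $g_t(\bar{v}_t)\geq \frac{\bar{v}_t}{\hat{v}_t}g_t(\hat{v}_t)$), and finish with the definition of $c$. Your extra care about selecting $\bar{v}_t$ on the increasing branch is a welcome explicit treatment of a step the paper dismisses as immediate, but it does not change the argument.
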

\textred{The proof of Lemma~\ref{lem:upper_bound_v_t_OOIC} is included in Appendix~\ref{apx:proof.lem.upper_bound_v_t_OOIC}, by leveraging the definition of $c$ and that $g_t\left(v\right)$  is an increasing concave function.

Next, we present an interesting result that bounds the contribution to the online revenue in all the slots whose base prices is no higher than any specific threshold. }

\begin{lem}
\label{lem:bound_on_x_OOIC}
For any input sequence $\sigma\in \Sigma$, for any threshold price $p\in[m,M]$, we have
\[
\sum_{\left\{t:\;p\left(t\right)\leq p\right\}}g_t(\bar{v}_t) \leq \frac{1}{\pi} p \cdot \Delta.
\]
\end{lem}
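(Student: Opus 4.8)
The plan is to read the left-hand side through the defining identity \eqref{eq:keep_cr} of \textsf{CR-Pursuit}$(\pi)$ and then compare the online revenue earned on the ``cheap'' slots against an offline benchmark restricted to those slots. Write $S_p\triangleq\{t:p(t)\le p\}$ and, for an input sequence $\sigma$, let $\delta_t(\sigma)\triangleq\eta_{OPT}(\sigma^{[1:t]})-\eta_{OPT}(\sigma^{[1:t-1]})$ be the contribution of $g_t$ to the offline optimum. Since \eqref{eq:keep_cr} gives $g_t(\bar v_t)=\frac1\pi\delta_t(\sigma)$, it suffices to prove $\sum_{t\in S_p}\delta_t(\sigma)\le p\,\Delta$. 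The benchmark that makes this possible is the following: the offline optimum of the subsequence $\sigma_p$ consisting only of the functions indexed by $S_p$, solved with the same inventory $\Delta$, satisfies $\eta_{OPT}(\sigma_p)\le p\,\Delta$. Indeed, by concavity and $g_t(0)=0$ we have $g_t(v)\le g_t'(0)\,v=p(t)\,v\le p\,v$ for every $t\in S_p$, so any feasible allocation earns at most $p\sum_t v_t\le p\,\Delta$ on these functions.

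The remaining and main task is to bound $\sum_{t\in S_p}\delta_t(\sigma)$ by $\eta_{OPT}(\sigma_p)$, which I would establish by a purely analytic reordering of the input that brings all $S_p$-functions to the front, invoking Lemma~\ref{lem:interchange_input_opt}. Two facts drive the argument. First, $\eta_{OPT}(\cdot)$ of a prefix depends only on the \emph{set} of functions it contains, not on their order; hence an adjacent interchange of $g_\tau$ and $g_{\tau+1}$ leaves $\delta_s$ unchanged for every $s\notin\{\tau,\tau+1\}$ (prefixes ending before $\tau$ are untouched, and those ending at or after $\tau+1$ contain both functions regardless of order). Second, Lemma~\ref{lem:interchange_input_opt} states precisely that moving a function one position earlier cannot decrease its own contribution. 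Combining these, each adjacent swap that advances an $S_p$-function past a non-$S_p$-function to its left increases that function's contribution while leaving all other $S_p$-contributions fixed, so a bubble-sort of such swaps that floats every $S_p$-function to the front (preserving their relative order) can only increase $\sum_{t\in S_p}\delta_t$.

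After this reordering the $S_p$-functions occupy positions $1,\dots,k$ with $k=|S_p|$, so their contributions telescope to $\eta_{OPT}(\sigma_p)-\eta_{OPT}(\emptyset)=\eta_{OPT}(\sigma_p)$. Chaining the three steps yields $\sum_{t\in S_p}g_t(\bar v_t)=\frac1\pi\sum_{t\in S_p}\delta_t(\sigma)\le\frac1\pi\eta_{OPT}(\sigma_p)\le\frac1\pi\,p\,\Delta$, which is the claim. I expect the main obstacle to be the bookkeeping in the reordering step: one must verify that a single adjacent swap perturbs only the two swapped contributions and that the orientation of Lemma~\ref{lem:interchange_input_opt} corresponds to advancing the cheap function (rather than delaying it); once these are pinned down, the concavity benchmark $\eta_{OPT}(\sigma_p)\le p\,\Delta$ and the telescoping are routine.
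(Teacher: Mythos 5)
Your proposal is correct and follows essentially the same route as the paper's own proof: both reduce the claim via \eqref{eq:keep_cr} to bounding the sum of offline increments on the cheap slots, bubble-sort the slots with $p(t)\leq p$ to the front by adjacent interchanges justified by Lemma~\ref{lem:interchange_input_opt} (each swap weakly increasing that sum), and then bound the telescoped prefix optimum by $p\Delta$ using concavity and $g_t(0)=0$. The only differences are cosmetic bookkeeping (the paper truncates at $T_1$ and splits into a special and a general case, while you work directly with the set $S_p$ and note the order-invariance of the offline optimum explicitly).
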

\textred{Lemma ~\ref{lem:bound_on_x_OOIC} is intuitive in that the left-hand-side is the online revenue obtained by \textsf{CR-Pursuit}($\pi$) in the slots whose base prices is not higher than $p$. The right-hand-side is simply the maximum revenue achievable by \textsf{CR-Pursuit}($\pi$) in these slots according to its design. In the proof, we first observe that if  $p\left(t\right)<p$, $\forall t\in [T]$, the result is immediate. As for general cases, based on Lemma ~\ref{lem:interchange_input_opt}, we can construct new input sequences by moving forward the slots with $p\left(t\right)\leq p$ in $\sigma$, while increasing the online revenue in the slots that we are interested in. At last, we obtain an input sequence with larger online revenue in these slots, which are now all in the beginning of the input sequence. The total online revenue in them is bounded by $p\cdot \Delta / \pi$. Lemma \ref{lem:bound_on_x_OOIC} allows to prove a key step used in the proof of Theorem~\ref{thm:UB_OOIC} below. }

\begin{lem}
\label{lem:to_goal_OOIC}For any input sequence $\sigma$, we have
\begin{equation}
\sum_{t=1}^{T}\frac{g_t(\bar{v}_t)}{p\left(t\right)}\leq \frac{\Delta}{\pi}\left(\ln\theta+1\right).\label{eq:sum_bound}
\end{equation}
\end{lem}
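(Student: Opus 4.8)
The plan is to leverage Lemma~\ref{lem:bound_on_x_OOIC}, which controls the cumulative online revenue collected in slots whose base price does not exceed a given threshold. For $p\in[m,M]$ define the cumulative online revenue function
\[
S(p)\triangleq\sum_{\{t:\,p(t)\le p\}} g_t(\bar v_t),
\]
which is non-decreasing in $p$, vanishes for $p<m$, and by Lemma~\ref{lem:bound_on_x_OOIC} satisfies $S(p)\le \tfrac{1}{\pi}\,p\,\Delta$ for every $p\in[m,M]$. The crucial observation is that the target sum weights each online revenue increment $g_t(\bar v_t)$ by $1/p(t)$, so it is exactly the Riemann--Stieltjes integral $\int_{[m,M]}\tfrac1p\,dS(p)$, with $S$ jumping by $g_t(\bar v_t)$ at $p=p(t)$. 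I emphasize that $\bar v_t$ depends on the ordering of $\sigma$, so I must not reorder the sequence; fortunately I never need to, since Lemma~\ref{lem:bound_on_x_OOIC} already supplies the partial-sum bound for the fixed $\sigma$.

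Next I would integrate by parts (discretely). Writing the distinct base prices present in $\sigma$ as $q_1<\dots<q_K$ and letting $a_k$ be the total online revenue from slots with base price exactly $q_k$ (so $A_k\triangleq\sum_{j\le k}a_j=S(q_k)\le q_k\Delta/\pi$), Abel summation gives
\[
\sum_{t=1}^T\frac{g_t(\bar v_t)}{p(t)}=\sum_{k=1}^K\frac{a_k}{q_k}
=\frac{A_K}{q_K}+\sum_{k=1}^{K-1}A_k\!\left(\frac{1}{q_k}-\frac{1}{q_{k+1}}\right).
\]
Because $q_k<q_{k+1}$, both the boundary term and every summand carry a non-negative multiplier of $A_k$, so I may substitute $A_k\le q_k\Delta/\pi$ termwise. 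This yields
\[
\sum_{t=1}^T\frac{g_t(\bar v_t)}{p(t)}\le\frac{\Delta}{\pi}+\frac{\Delta}{\pi}\sum_{k=1}^{K-1}\left(1-\frac{q_k}{q_{k+1}}\right).
\]

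Finally I would invoke the elementary inequality $1-x\le\ln(1/x)$ for $x>0$ (equivalently $\ln x\le x-1$) with $x=q_k/q_{k+1}$, giving $1-q_k/q_{k+1}\le\ln(q_{k+1}/q_k)$, after which the sum telescopes:
\[
\sum_{k=1}^{K-1}\left(1-\frac{q_k}{q_{k+1}}\right)\le\sum_{k=1}^{K-1}\ln\frac{q_{k+1}}{q_k}=\ln\frac{q_K}{q_1}\le\ln\frac{M}{m}=\ln\theta,
\]
where the last step uses $q_1\ge m$ and $q_K\le M$. Combining the two displays gives the claimed bound $\sum_t g_t(\bar v_t)/p(t)\le\frac{\Delta}{\pi}(\ln\theta+1)$. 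Equivalently, one can phrase the argument in continuous form as $\int_{[m,M]}\tfrac1p\,dS(p)=\tfrac{S(M)}{M}+\int_m^M\tfrac{S(p)}{p^2}\,dp\le\tfrac{\Delta}{\pi}+\tfrac{\Delta}{\pi}\int_m^M\tfrac{dp}{p}$, producing the same $\ln\theta+1$. I expect the only delicate point to be the bookkeeping in the summation-by-parts step, namely keeping the multipliers of $A_k$ non-negative so that Lemma~\ref{lem:bound_on_x_OOIC} can be applied termwise, together with recognizing that the telescoping difference is precisely what the logarithmic inequality converts into $\ln\theta$.
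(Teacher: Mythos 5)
Your proof is correct, and it reaches the paper's bound by a genuinely different middle step while resting on the same key ingredient, Lemma~\ref{lem:bound_on_x_OOIC}, and the same grouping of slots by distinct base prices. The paper converts the partial-sum constraints $\sum_{j\le i} y_j \le p_i$ (in its normalization $y_i = \frac{\pi}{\Delta}\sum_{t:p(t)=p_i} g_t(\bar{v}_t)$) into an auxiliary maximization problem, solves it via KKT conditions to show that the extremum is attained when every constraint is tight, i.e., $y_i = p_i - p_{i-1}$, and then bounds $\sum_i (p_i-p_{i-1})/p_i \le 1 + \int_{p_1}^{p_n} \frac{dx}{x} \le 1+\ln\theta$. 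You bypass the optimization entirely: Abel summation rewrites the weighted sum so that each partial sum $A_k = S(q_k)$ carries the non-negative multiplier $\frac{1}{q_k}-\frac{1}{q_{k+1}}$ (or $\frac{1}{q_K}$ for the boundary term), which licenses the termwise substitution $A_k \le q_k\Delta/\pi$; this produces the same intermediate quantity $\frac{\Delta}{\pi}\bigl(1+\sum_k (1 - q_k/q_{k+1})\bigr)$, which you then close with $1-x\le\ln(1/x)$ and telescoping --- the same logarithmic comparison the paper performs via the integral. What your route buys is economy and transparency: no auxiliary problem, no KKT verification or argument that all constraints bind, and the exact role of price monotonicity (keeping the Abel multipliers non-negative) is laid bare. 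What the paper's route buys is structural insight: the KKT solution exhibits the worst-case revenue allocation $y_i = p_i - p_{i-1}$, which is precisely the ``critical'' increasing-price sequence, tying this lemma to the paper's broader narrative in Sec.~\ref{OWT} about the inputs that achieve the $\ln\theta+1$ bound with equality.
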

\textred{The idea to prove Lemma ~\ref{lem:to_goal_OOIC} is to construct an optimization problem, whose optimal objective value bounds the left-hand-side in \eqref{eq:sum_bound}, subject to the constraint from Lemma ~\ref{lem:bound_on_x_OOIC}. Then we show the optimal objective value can be further upper bounded by the right-hand-side in \eqref{eq:sum_bound}.}

We are now ready to prove Theorem \ref{thm:UB_OOIC}.

\begin{proof}[Proof of Theorem \ref{thm:UB_OOIC}]
It is clear that \textsf{CR-Pursuit} is at best $\left(\ln\theta+1\right)$-competitive, as it covers the one-way trading problem as a special case, which has an optimal competitive ratio of $\ln\theta +1$.

To establish the upper bound, by Lemmas~\ref{lem:upper_bound_v_t_OOIC} and \ref{lem:to_goal_OOIC}, we observe
\begin{align*}
\Phi_\Delta\left(\pi\right)&=\max_{\sigma\in \Sigma}\sum_{t=1}^{T}\bar{v}_t \le  \sum_{t=1}^{T} c \frac{g_t(\bar{v}(t)}{p\left(t\right)}\\
& \le c \frac{\Delta}{\pi} \left(\ln \theta +1\right).
\end{align*}
By solving $c \frac{\Delta}{\pi} \left(\ln \theta +1\right)=\Delta$, we get that $\bar{\pi}=c\left(\ln \theta +1\right)$ and $\Phi_\Delta\left(\bar{\pi}\right)\le \Delta$. Then according to the definitions, \textsf{CR-Pursuit}$\left(\bar{\pi}\right)$ is feasible and is $\bar{\pi}$-competitive. Hence, $\bar{\pi}$ is an upper bound for the optimal competitive ratio $\pi^*$.
\end{proof}
Theorem \ref{thm:UB_OOIC} implies that \textsf{CR-Pursuit} achieves the optimal competitive ratio (up to a problem-dependent factor $c$) among all deterministic online algorithms.







\section{Application to One-way Trading}
\label{application}

In this section, we apply  \textsf{CR-Pursuit}  to the classic one-way trading problem \cite{el2001optimal} and its generalizations, illustrating that the framework can both match state-of-the-art results for the classic setting and provide new results for generalizations that have previously resisted analysis. In particular, using the CR-Pursuit framework, we obtain an online algorithm matching the optimal competitive ratio  $(\ln \theta +1 )$ for the classic one-way trading problem in Proposition \ref{thm:owt-alg} and a near-optimal $(\ln \theta + 4/3)$ result for the case with linear price elasticity in Theorem \ref{thm:owt-pe}. Furthermore, the algorithmic framework also extends to any convex price elasticity, and yield online algorithms with order-optimal competitive ratio in these cases.  

This section also provides an illustration of how the framework can be applied to specific problem domains to obtain tighter competitive ratio upper bounds that the generic ones under general settings. In particular, 
for one-way trading with linear price elasticity, the upper bound derived from Sec. \ref{UB_CR} is $2(\ln \theta + 1)$ while the bound obtained in this section is $\ln \theta + 4/3$.

In Sec. \ref{OWT}, we obtain a close-form expression of $\Phi_\Delta(\pi)$ and compute the optimal $\pi^*$ in this special case. In Sec. \ref{sec:OWT_PE}, we show the ease of generalizing the one-way trading problem, to cases where price formation include price elasticity, an aspect that has been left out and desired in the one-way trading community. 

\subsection{Classic One-way Trading}
\label{OWT}
In the classic one-way trading problem, a trader owns some assets (e.g., one dollar) at the beginning and aims to exchange it into another assets (e.g., yen) as much as possible,
depending on the price (e.g., exchange rate). Thus, the one-way trading problem is a special case of the OOIC problem with $g_t(v_t)=p(t)v_t$ for all $t\in[T]$ and the input at time $t$  can be simplified as $p(t)$. 

As a direct application, one can obtain from Sec. \ref{UB_CR} that the upper bound for the one-way trading problem is $\ln \theta+1$, which matches the lower bound. Thus, we immediately know that the optimal competitive ratio for one-way trading is $\ln \theta+1$ and \textsf{CR-Pursuit}$(\ln\theta+1)$ is an optimal deterministic online algorithm. In this section, with the aim of demonstrating the possibility of mathematically characterizing $\Phi_\Delta(\pi)$ in specific problems, we first derive a closed-form expression of $\Phi_\Delta(\pi)$, then we obtain the optimal competitive ratio $\pi^*$ by solving the characteristic equation $\Phi_\Delta(\pi)=\Delta$.

In the classic one-way trading problem, given any input up to time $t$, denoted as $\sigma^{[1:t]}\triangleq\{p(1),p(2),...,p(t)\}$, the optimal offline revenue can be expressed as
$\eta_{OPT}(\sigma^{[1:t]})=\Delta\cdot\max\sigma^{[1:t]}.$
Given any $\pi\ge1$, we focus on  \textsf{CR-Pursuit}$(\pi)$ defined in Sec. \ref{sec:online_alg_framework}. At time $t$, \textsf{CR-Pursuit}$(\pi)$ sells the amount $\bar{v}_t\in[0,\Delta]$ that satisfies:
\begin{equation}
\bar{v}_t=\frac{1}{\pi \cdot p(t)}\left[\eta_{OPT}\left(\sigma^{[1:t]}\right)-\eta_{OPT}\left(\sigma^{[1:t-1]}\right)\right].
\end{equation}
As discussed, \textsf{CR-Pursuit}$\left(\pi\right)$ aims at keeping the \textit{offline-to-online revenue ratio} to be $\pi>1$ at all time.


From Sec. \ref{sec:online_alg_framework}, we know that if $\Phi_\Delta(\pi)\le\Delta$, then \textsf{CR-Pursuit}$(\pi)$ is feasible and it is $\pi$-competitive. In the following, our goal is to derive a close-form expression of $\Phi_\Delta(\pi)$.

Observe that at slot $t$, the selling decision of $\textsf{CR-Pursuit}({\pi^*})$
can be simplified as 
\begin{align}
\bar{v}_t & =\frac{\Delta}{\pi^* \cdot p(t)}\left(\max\sigma^{[1:t]}-\max\sigma^{[1:t-1]}\right).\label{eq:keep_cr_OWT}
\end{align}
This suggests that $\textsf{CR-Pursuit}({\pi^*})$ will sell only when the
current price is higher than the best price so far. With this observation, we have the following lemma.

\begin{lem}
\label{lem:increasing_price}For \textsf{CR-Pursuit}$(\pi)$ with $\pi\ge1$,
given any input $\sigma^{[1:T]}$, to compute $\Phi_\Delta(\pi)$,
it is sufficient to consider increasing-price sequences.
\end{lem}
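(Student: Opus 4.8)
The plan is to exploit the explicit selling rule \eqref{eq:keep_cr_OWT} to show that \textsf{CR-Pursuit}$(\pi)$ sells only at slots whose price sets a new running maximum, and that deleting every other slot leaves the total sold quantity unchanged while producing an increasing-price sequence. Concretely, I want to prove that for every admissible input $\sigma$ there is an increasing-price input $\sigma'$ with $\sum_{t}\bar{v}_t(\sigma')=\sum_{t}\bar{v}_t(\sigma)$; since increasing-price sequences are a subset of all inputs, this immediately gives that the maximization defining $\Phi_\Delta(\pi)$ can be restricted to them.

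First I would read off from \eqref{eq:keep_cr_OWT}, together with $\eta_{OPT}(\sigma^{[1:t]})=\Delta\cdot\max\sigma^{[1:t]}$, that $\bar{v}_t>0$ precisely when $p(t)>\max\sigma^{[1:t-1]}$, and $\bar{v}_t=0$ otherwise: indeed, if $p(t)\le\max\sigma^{[1:t-1]}$ then $\max\sigma^{[1:t]}=\max\sigma^{[1:t-1]}$ and the increment in \eqref{eq:keep_cr_OWT} vanishes. Moreover, at each such ``record-setting'' slot one has $\max\sigma^{[1:t]}=p(t)$, so $\bar{v}_t=\frac{\Delta}{\pi}\bigl(1-\max\sigma^{[1:t-1]}/p(t)\bigr)$. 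Thus only the record-setting slots contribute to the total inventory sold.

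Next, given an arbitrary input $\sigma^{[1:T]}$, let $S=\{t:p(t)>\max\sigma^{[1:t-1]}\}$ be the set of record-setting slots, and let $\sigma'$ be the subsequence obtained by keeping exactly the slots in $S$ in their original order. By construction the retained prices are strictly increasing, so $\sigma'$ is an increasing-price sequence (and all its prices still lie in $[m,M]$, so it is admissible; note $S$ is nonempty since the first slot is always record-setting). The crucial observation is that deleting a non-record slot never alters the running maximum at any later slot, because such a slot had price no larger than the maximum already attained before it. Consequently the quantity $\max\sigma^{[1:t-1]}$ seen by \textsf{CR-Pursuit}$(\pi)$ at every retained slot is identical under $\sigma$ and under $\sigma'$, so each retained $\bar{v}_t$ is unchanged, and therefore $\sum_{t}\bar{v}_t(\sigma')=\sum_{t\in S}\bar{v}_t(\sigma)=\sum_{t}\bar{v}_t(\sigma)$.

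This matching sequence argument is essentially bookkeeping, so I do not expect a serious obstacle; the only step requiring care is verifying that removing a non-record slot leaves the running maximum — and hence every subsequent output $\bar{v}_t$ — untouched, which follows directly from the definition of a record-setting slot. Combining the two inequalities $\max_{\sigma\text{ increasing}}\sum_t\bar{v}_t\le\Phi_\Delta(\pi)$ (trivial, as increasing sequences are a subset) and $\Phi_\Delta(\pi)\le\max_{\sigma\text{ increasing}}\sum_t\bar{v}_t$ (from the construction of $\sigma'$) then yields equality, establishing that it suffices to consider increasing-price sequences when computing $\Phi_\Delta(\pi)$.
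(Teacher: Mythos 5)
Your proposal is correct and follows essentially the same route as the paper's own proof: both observe from \eqref{eq:keep_cr_OWT} that \textsf{CR-Pursuit}$(\pi)$ sells only at slots whose price exceeds the running maximum, and then delete the non-selling slots to obtain a strictly increasing price sequence with identical outputs. Your writeup merely makes explicit the bookkeeping (record-setting slots, invariance of running maxima under deletion, admissibility of the reduced sequence) that the paper's terse argument leaves implicit.
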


\textred{Lemma \ref{lem:increasing_price} is a corollary of Lemma \ref{lem:increasing_marginal_worstcase} in that while the marginal prices are determined by the participation of the algorithm in the latter, it is constant here in the classic one-way trading problem. }Lemma \ref{lem:increasing_price} can be proved by observing that the revenue of both the offline and online algorithms remain unchanged if the current price is not the highest price so far, and in that case removing this price from the input sequence will not affect the behaviors of both the offline and online algorithms.
From Lemma \ref{lem:increasing_price}, we know that it is sufficient to consider the following increasing price sequence with length $n\le T$:
\begin{equation}
m\leq p_1<p_2<\cdots <p_n\leq M.\label{critical_seq}
\end{equation}
Under the given price sequence, the optimal offline revenue at time $t\in[n]$ can be simplified as 
$$\eta_{OPT}\left(\sigma^{[1:t]}\right)=p_t\Delta.$$
According to (\ref{eq:keep_cr_OWT}), the output of \textsf{CR-Pursuit}$(\pi)$ is given by
$$
\bar{v}_t=\frac{\Delta}{\pi}\frac{p_{t}-p_{t-1}}{p_{t}},\forall t\in [n],
$$
where $p_0=0$.
Then we have 
\begin{align*}
\Phi_\Delta({\pi}) &
= \max_{p_{1},p_{2},\cdots,p_{n}}\sum_{t=1}^{n}\bar{v}_t \\
& =\max_{p_{1},p_{2},\cdots,p_{n}}\frac{\Delta}{{\pi}}\left(1+\frac{p_{2}-p_{1}}{p_{2}}+\cdots+\frac{p_{n}-p_{n-1}}{p_{n}}\right)\\
 & \stackrel{(a)}{=}\frac{\Delta}{{\pi}}\left(1+\int_{m}^{M}\frac{1}{x}dx\right)=\frac{\Delta}{{\pi}}\left(1+\ln\theta\right),
\end{align*}
where (a) holds when the input sequence in \eqref{critical_seq} satisfies  $n\to\infty$ and $p_i\to p_{i+1},\forall i\in [n-1]$. Indeed, this is the worst case input sequence for one-way trading problem, also known as the ``critical'' input sequence.
Setting $\Phi_\Delta({\pi})=\Delta$ yields
the solution that ${\pi^*}=\ln\theta+1$. Consequently, we have
the following result.
\begin{prop}\label{thm:owt-alg}
With  ${\pi^*}=\ln\theta+1$, $\textsf{CR-Pursuit}({\pi^*})$ is feasible and an optimal online algorithm for the one-way trading problem.
\end{prop}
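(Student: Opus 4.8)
The plan is to treat the claim as an immediate corollary of Theorem~\ref{thm:optimal_cr2}, once the closed-form expression for $\Phi_\Delta(\pi)$ obtained just above is substituted into the characteristic equation. Recall that Theorem~\ref{thm:optimal_cr2} asserts that the unique root $\pi^*$ of $\Phi_\Delta(\pi)=\Delta$ both makes \textsf{CR-Pursuit}$(\pi^*)$ feasible and equals the optimal competitive ratio over all deterministic online algorithms. Hence it suffices to verify that $\pi=\ln\theta+1$ is exactly that root.

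First I would take the specialized expression derived for one-way trading ($g_t(v)=p(t)v$),
\[
\Phi_\Delta(\pi)=\frac{\Delta}{\pi}\left(1+\ln\theta\right),
\]
and substitute it into $\Phi_\Delta(\pi)=\Delta$. Cancelling $\Delta>0$ yields $\pi=1+\ln\theta$ directly. Second, I would invoke Lemma~\ref{lem:vt_decreasing}, by which $\Phi_\Delta(\pi)$ is strictly decreasing on $[1,\infty)$, to conclude that this root is unique; therefore $\pi^*=\ln\theta+1$ is precisely the quantity named in Theorem~\ref{thm:optimal_cr2}. Applying that theorem then delivers both conclusions at once: \textsf{CR-Pursuit}$(\ln\theta+1)$ is feasible, and $\ln\theta+1$ is the optimal competitive ratio, so \textsf{CR-Pursuit}$(\ln\theta+1)$ is an optimal online algorithm for one-way trading.

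Because the proposition is a corollary, no difficult step survives at this stage; the real effort sits in the derivation of $\Phi_\Delta(\pi)$ that precedes the statement. The only genuinely delicate point there, which I would want to confirm is already discharged, is the limiting argument (the step marked $(a)$ above): reducing via Lemma~\ref{lem:increasing_price} to increasing-price sequences turns $\sum_t \bar{v}_t$ into $\frac{\Delta}{\pi}\sum_t (p_t-p_{t-1})/p_t$, and one must argue that the supremum over such sequences is attained in the limit $n\to\infty$ with $p_i\to p_{i+1}$, so that the telescoping sum approaches $\int_m^M \frac{dx}{x}=\ln\theta$ and this value is a genuine supremum rather than a mere upper bound. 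Once that is in place, equating the characteristic equation's root with $\ln\theta+1$ is immediate, and the proposition follows.
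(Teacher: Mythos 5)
Your route treats the proposition as an immediate corollary of Theorem~\ref{thm:optimal_cr2}; the paper does something different: it re-runs that theorem's adversary argument \emph{inside} the one-way trading instance class (this is what ``the proof follows the same idea and approach as that of Theorem~\ref{thm:optimal_cr2}'' means --- feed an increasing worst-case price sequence to an arbitrary deterministic algorithm $\mathcal{A}$, stop at the first slot where $\mathcal{A}$'s cumulative sales drop to or below those of \textsf{CR-Pursuit}$(\pi^*)$, and use monotonicity of prices to bound $\mathcal{A}$'s revenue by $\eta_{OPT}/\pi^*$). The distinction is not cosmetic, and it hides the gap in your proposal: in Theorem~\ref{thm:optimal_cr2}, both $\Phi_\Delta(\pi)$ and ``the optimal competitive ratio of deterministic online algorithms'' are defined relative to the full OOIC input class $\Sigma$ (all concave $g_t\in\mathcal{G}$), whereas the proposition concerns the one-way trading problem, where the adversary may only present linear functions and where $\Phi_\Delta(\pi)=\frac{\Delta}{\pi}(1+\ln\theta)$ is the supremum over that \emph{subclass} only. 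Substituting the subclass $\Phi_\Delta$ into the full-class theorem is a category mismatch: read literally, the theorem's characteristic equation involves a different (generally larger) $\Phi_\Delta$ and its optimality claim quantifies over a richer adversary, so the proposition is not a literal corollary. The repair is the observation --- which is precisely the content of the paper's ``exercise'' --- that the proof of Theorem~\ref{thm:optimal_cr2} is family-agnostic: the worst-case sequence of Lemma~\ref{lem:increasing_marginal_worstcase} is obtained by swapping functions of a given sequence (Lemma~\ref{lem:interchange_input_opt}), and the adversary only plays prefixes of it, so all constructed inputs remain linear (Lemma~\ref{lem:increasing_price} is exactly this specialization). Note also that feasibility and $\pi^*$-competitiveness need no theorem at all, since $\Phi_\Delta(\pi^*)\le\Delta$ plus the definitions in Sec.~\ref{sec:online_alg_framework} already give them; only the ``no deterministic algorithm beats $\pi^*$'' half requires the re-scoped adversary argument. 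Your presentation, once this bridging observation is added, is the more economical one; the paper's re-derivation has the virtue of making the restricted adversary explicit.

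There is a second, sharper issue: you flag the limiting step in the derivation of $\Phi_\Delta(\pi)$ but then assert that ``no difficult step survives at this stage.'' It does survive. The optimality half of Theorem~\ref{thm:optimal_cr2}'s proof starts from a sequence on which \textsf{CR-Pursuit}$(\pi^*)$ sells \emph{exactly} $\Phi_\Delta(\pi^*)=\Delta$; this attainment is what forces an algorithm that always stays ahead of \textsf{CR-Pursuit}$(\pi^*)$ to violate the inventory constraint. Your own closed-form computation shows that in one-way trading no finite price sequence attains the supremum (for finite $n$ the telescoping sum is strictly below $1+\ln\theta$), so the maximizing sequence assumed in Lemma~\ref{lem:increasing_marginal_worstcase} does not exist in this class and the argument cannot be copied verbatim. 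One must instead run the adversary on near-worst-case sequences: on a sequence where \textsf{CR-Pursuit}$(\pi^*)$ sells $\Delta-\epsilon$, either $\mathcal{A}$ falls behind at some slot, in which case the increasing-price comparison gives ratio at least $\pi^*$, or $\mathcal{A}$ stays strictly ahead throughout, in which case a summation-by-parts comparison shows its revenue exceeds that of \textsf{CR-Pursuit}$(\pi^*)$ by at most $M\epsilon$, giving ratio at least $\pi^*-\delta(\epsilon)$ with $\delta(\epsilon)\to 0$; letting $\epsilon\to 0$ yields the lower bound. This $\epsilon$-argument is the step your proposal is actually missing as written.
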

The proof follows the same idea and approach as that of Theorem~\ref{thm:optimal_cr2}. We leave it as an exercise for readers.



\subsection{One-way Trading with Price Elasticity}
\label{sec:OWT_PE}

In this subsection, we consider the one-way trading problem in a  generalized setting with an additional flexibility on the price model playing the role of \emph{price elasticity}. We assume that price is affected by the total quantity sold at each slot, implying that the decision of how much to sell affects the trading price, usually known in the economics literature as \emph{price elasticity}. 


Specifically, we assume that at each slot $t\in[T]$, the price elasticity, defined as $\triangleq f_{t}(v)$,
is a convex non-negative function of the selling quantity with $f(0)=0$. Under this setting, the revenue function at time $t$ becomes $g_t(v)=(p(t)-f_{t}(v))v.$ 
This setting can be considered as a special case of OOIC and the input at time $t$  can be simplified as $\left(p(t),f_t(v)\right)$. Here we have $g'_t(0)=p(t)\in[m,M]$ and  $f_t(v)\in [0,+\infty),\forall v\in[0,\Delta], f_t(0)=0$. Namely, the set of all possible revenue functions can be expressed as
\begin{align*}
\mathcal{G}=&\left\{g_t(v)|g_t(v)=(p(t)-f_t(v))v,p(t)\in [m,M],\right.\\ 
&\left.f_t(v)\in [0,+\infty),\forall v\in[0,\Delta], f_t(0)=0\right\}.
\end{align*} 
Note that when $f_t(v)=0,\forall t\in[T]$, the problem reduces to one-way trading problem considered in Sec.~\ref{OWT}. Thus we note that any deterministic online algorithm in one-way trading with price elasticity has a competitive ratio of at least $\ln \theta +1$. When $f_t(v)=\alpha_t v,\alpha_t\ge 0, \forall t\in [T]$, the problem becomes a one-way trading problem with linear price elasticity, which is a common setting in economic literature, e.g., in Cournot competition \cite{pang2017efficiency}.

Consider the online algorithm \textsf{CR-Pursuit}$(\pi)$ defined in Sec.~\ref{sec:online_alg_framework}. When there is price elasticity in the setting, it is difficult to obtain the closed-form expression of $\Phi_\Delta(\pi)$. We follow the analysis in Sec. \ref{UB_CR} to obtain an upper bound on $\Phi_\Delta(\pi)$. 

\textred{In particular, restating Lemma~\ref{lem:upper_bound_v_t_OOIC} under the parametric assumptions of $g_t(v_t)$ in the one-way trading problem with price elasticity, we can upper bound the selling quantity of \textsf{CR-Pursuit}$(\pi)$ at each slot with a better characterization of $c$, reflected in the following lemma.}
\begin{lem}
\label{upper_bound_v_t_OWTPE}
For any input sequence $\sigma$, we have $$\bar{v}_{t}\leq c\frac{g_t(\bar{v}_t)}{p(t)},\forall t\in [T],$$ where  
\begin{equation}
    c=2\left(1+\sqrt{1-{1}/{\pi}}\right)^{-1}. \label{eq:c.OWT.with.price.elasity}
\end{equation}
\end{lem}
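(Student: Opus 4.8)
The plan is to reduce the claimed inequality to a scalar bound on the normalized price drop $b\triangleq f_t(\bar v_t)/p(t)$, and then to pin $b$ down via two inequalities: one that says \textsf{CR-Pursuit}$(\pi)$ sells conservatively, and one that exploits the convexity of $f_t$. Since $g_t(\bar v_t)=(p(t)-f_t(\bar v_t))\bar v_t=p(t)(1-b)\bar v_t$, the target $\bar v_t\le c\,g_t(\bar v_t)/p(t)$ is (for $\bar v_t>0$; the case $\bar v_t=0$ is trivial) equivalent to $1\le c(1-b)$, i.e.\ to
\[
b\;\le\;1-\frac{1}{c}\;=\;\frac{1-\sqrt{1-1/\pi}}{2},
\]
where the last equality uses the definition of $c$ in \eqref{eq:c.OWT.with.price.elasity}. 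Thus the whole task becomes an upper bound on $b$.

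First I would record two facts. Let $a\triangleq f_t(\hat v_t)/p(t)$, with $\hat v_t$ the maximizer of $g_t$. Combining the defining relation \eqref{eq:keep_cr} of \textsf{CR-Pursuit}$(\pi)$ with the upper bound \eqref{eq:off_Optimal_value_increment_upper_bound} in Lemma~\ref{lem:opt_difference_bound} gives the conservativeness estimate $g_t(\bar v_t)\le g_t(\hat v_t)/\pi$, which in normalized form reads $(1-b)\bar v_t\le\tfrac{1}{\pi}(1-a)\hat v_t$. Next, since $f_t$ is convex with $f_t(0)=0$, the map $v\mapsto f_t(v)/v$ is non-decreasing; because $\bar v_t\le\hat v_t$ (as $g_t$ is increasing up to its maximizer and $g_t(\bar v_t)\le g_t(\hat v_t)$), this yields $f_t(\bar v_t)/\bar v_t\le f_t(\hat v_t)/\hat v_t$, i.e.\ $b/a\le\bar v_t/\hat v_t$.

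Chaining these two to eliminate $\bar v_t/\hat v_t$ gives $b/a\le(1-a)/[\pi(1-b)]$, that is
\[
\pi\,b\,(1-b)\;\le\;a\,(1-a).
\]
The crux is then to control the right-hand side. Here I would invoke the structural constraint from the requirement that $g_t$ be increasing and concave on $[0,\Delta]$: evaluating $g_t'(\hat v_t)\ge 0$ gives $p(t)\ge f_t'(\hat v_t)\hat v_t+f_t(\hat v_t)$, and convexity of $f_t$ with $f_t(0)=0$ gives $f_t(\hat v_t)\le f_t'(\hat v_t)\hat v_t$; together these force $f_t(\hat v_t)\le p(t)/2$, i.e.\ $a\le 1/2$. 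Consequently $a(1-a)\le 1/4$, so $b(1-b)\le 1/(4\pi)$, and moreover $b\le a\le 1/2$. Solving the quadratic $b(1-b)\le 1/(4\pi)$ and selecting the smaller root (justified by $b\le 1/2$) yields exactly $b\le(1-\sqrt{1-1/\pi})/2$, the reduced target; substituting back completes the proof.

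I expect the main obstacle to be this last step: establishing $a\le 1/2$ and, equally importantly, arguing that $b$ lies on the \emph{lower} branch of the quadratic rather than the upper one. Both hinge on the same structural coupling, namely that the concavity and monotonicity of the revenue function $g_t$ on $[0,\Delta]$ translate, through the convexity of the elasticity $f_t$, into a hard cap of $1/2$ on the normalized price drop. Without this coupling the quadratic inequality alone only gives $b(1-b)\le 1/(4\pi)$, which holds on two disjoint intervals and would fail to determine $c$.
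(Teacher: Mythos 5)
Your proposal is correct, and it assembles the bound differently from the paper, even though both arguments rest on the same three ingredients: the conservativeness estimate $g_t(\bar{v}_t)\le g_t(\hat{v}_t)/\pi$ obtained from \eqref{eq:keep_cr} and Lemma~\ref{lem:opt_difference_bound}, the inequality $\bar{v}_t\le\hat{v}_t$, and the chord bound coming from convexity of $f_t$ (that $f_t(v)/v$ is non-decreasing). The paper's proof constructs the quadratic minorant $\tilde{g}_t(v)=(p(t)-\alpha v)v$ with $\alpha=f_t(\hat{v}_t)/\hat{v}_t$, solves $\tilde{g}_t(\tilde{v}_t)=g_t(\bar{v}_t)$ explicitly for its \emph{smaller} root, bounds the discriminant via $\pi g_t(\bar{v}_t)\le g_t(\hat{v}_t)=\tilde{g}_t(\hat{v}_t)\le p^2(t)/(4\alpha)$, and transfers the root bound back through $\bar{v}_t\le\tilde{v}_t$. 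You instead stay with $g_t$ itself, reduce the claim to a scalar bound on $b=f_t(\bar{v}_t)/p(t)$, chain the two inequalities into $\pi b(1-b)\le a(1-a)$, and perform the branch selection via the structural cap $a=f_t(\hat{v}_t)/p(t)\le 1/2$, proved from $g_t'(\hat{v}_t)\ge 0$ together with the gradient inequality $f_t(\hat{v}_t)\le f_t'(\hat{v}_t)\hat{v}_t$. That cap is a fact the paper never states or needs: its inequality $\tilde{g}_t(\hat{v}_t)\le p^2(t)/(4\alpha)$ amounts only to $a(1-a)\le 1/4$, which holds for every $a$, and its branch selection is instead encoded in defining $\tilde{v}_t$ as the smaller root and showing $\bar{v}_t\le\tilde{v}_t$ by monotonicity of $g_t$ on $[0,\hat{v}_t]$. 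Your route buys a clean economic statement (the price drop at the revenue maximizer is at most half the base price), which is exactly what pins $b$ to the lower branch of the quadratic; the paper's route is more mechanical but avoids the extra lemma.

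Two shared caveats, neither of which is a gap specific to your write-up: first, both arguments implicitly take $\bar{v}_t$ to be the smaller solution of \eqref{eq:keep_cr}, since in the elasticity setting $g_t$ need not be monotone on all of $[0,\Delta]$ and $g_t(\bar{v}_t)\le g_t(\hat{v}_t)$ alone does not force $\bar{v}_t\le\hat{v}_t$; the paper makes the identical leap. Second, the degenerate case $a=0$ (where you divide by $a$) should be dispatched in a half-sentence — there $f_t(\bar{v}_t)=0$ by monotonicity of $f_t(v)/v$, so $b=0$ and the target holds — mirroring the paper's unaddressed division by $\alpha$.
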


We note that value of $c$ given in \eqref{eq:c.OWT.with.price.elasity} is smaller than that derived in Lemma~\ref{lem:upper_bound_v_t_OOIC}. The idea of the proof in Appendix~\ref{app:proof_upper_bound_v_t_OWTPE} is similar to that of Lemma ~\ref{lem:upper_bound_v_t_OOIC}, but we further utilize the special structure of $g_t(\cdot)$ here (i.e., the convexity of $f_t(\cdot)$). The tighter characterization of $c$ allows us to develop an online algorithm with better competitive ratio as compared to the one obtained as a result of Sec.~\ref{UB_CR}.

\begin{lem}
\label{lem:Phi_pi}For \textsf{CR-Pursuit}$(\pi)$ with $\pi\ge1$, we have $$\Phi_\Delta(\pi)\le \bar{\Phi}(\pi),$$
where $\bar{\Phi}(\pi)\triangleq 2\Delta\left[\pi\left(1+\sqrt{1-{1}/{\pi}}\right)\right]^{-1}(\ln\theta+1).$
\end{lem}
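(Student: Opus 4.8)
The plan is to mirror the derivation of the upper bound on $\Phi_\Delta(\pi)$ given in the proof of Theorem~\ref{thm:UB_OOIC}, but with the sharper slotwise bound from Lemma~\ref{upper_bound_v_t_OWTPE} in place of the generic one in Lemma~\ref{lem:upper_bound_v_t_OOIC}. First I would write, by definition, $\Phi_\Delta(\pi)=\max_{\sigma\in\Sigma}\sum_{t=1}^{T}\bar v_t$, and fix an arbitrary input sequence $\sigma$ attaining (or approaching) this maximum.

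Next I would apply Lemma~\ref{upper_bound_v_t_OWTPE} slot by slot. Since the one-way trading problem with price elasticity is a special case of OOIC, the lemma gives $\bar v_t \le c\, g_t(\bar v_t)/p(t)$ for every $t\in[T]$, now with the improved constant $c=2\bigl(1+\sqrt{1-1/\pi}\bigr)^{-1}$ from \eqref{eq:c.OWT.with.price.elasity}. Summing over $t$ yields $\sum_{t=1}^{T}\bar v_t \le c\sum_{t=1}^{T} g_t(\bar v_t)/p(t)$. Then I would invoke Lemma~\ref{lem:to_goal_OOIC}, which was established for any input sequence in the general OOIC setting and hence applies verbatim here, to bound $\sum_{t=1}^{T} g_t(\bar v_t)/p(t) \le (\Delta/\pi)(\ln\theta+1)$. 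Combining the two inequalities gives $\sum_{t=1}^{T}\bar v_t \le c(\Delta/\pi)(\ln\theta+1)$. Since the bound is uniform over $\sigma$, it also bounds the maximum, so $\Phi_\Delta(\pi)\le c(\Delta/\pi)(\ln\theta+1)$. Finally I would substitute $c=2\bigl(1+\sqrt{1-1/\pi}\bigr)^{-1}$ and rearrange to obtain exactly $\bar\Phi(\pi)=2\Delta\bigl[\pi(1+\sqrt{1-1/\pi})\bigr]^{-1}(\ln\theta+1)$.

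There is no real obstacle here: the statement is essentially a corollary obtained by plugging the tighter $c$ into the chain of inequalities already derived for Theorem~\ref{thm:UB_OOIC}. The only point requiring a moment's care is that $c$ now depends on $\pi$, whereas in Lemma~\ref{lem:upper_bound_v_t_OOIC} it was a fixed property of the function family; but for a fixed $\pi$ it remains a constant, so the summation and the subsequent substitution go through unchanged. It is also worth emphasizing that Lemma~\ref{lem:to_goal_OOIC} is stated for arbitrary inputs and general concave revenue functions, so its applicability to the present price-elasticity special case needs no re-derivation, and the entire argument reduces to a two-line inequality chain followed by algebraic simplification.
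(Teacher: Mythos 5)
Your proposal is correct and follows essentially the same route as the paper's own proof: apply the sharpened slotwise bound of Lemma~\ref{upper_bound_v_t_OWTPE} with $c=2\left(1+\sqrt{1-1/\pi}\right)^{-1}$, sum over $t$, and then invoke Lemma~\ref{lem:to_goal_OOIC} to bound $\sum_{t=1}^{T}g_t(\bar{v}_t)/p(t)$ by $(\Delta/\pi)(\ln\theta+1)$. Your added remarks---that the bound holds uniformly over $\sigma$ and that the $\pi$-dependence of $c$ is harmless for fixed $\pi$---are accurate and consistent with the paper's argument.
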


Lemma \ref{lem:Phi_pi} shows that $\Phi_\Delta(\pi)$ is upper bounded
by $\bar{\Phi}(\pi)$. It is easy to show that $\bar{\Phi}(\pi)$ is decreasing in $\pi\ge 1$.
Thus by setting $\bar{\Phi}(\bar{\pi})=\Delta$, we can guarantee that \textsf{CR-Pursuit}$(\bar{\pi})$ is feasible. Then we have the following result, which shows that the competitive ratio of \textsf{CR-Pursuit}$(\bar{\pi})$ is $\ln\theta+\Omega(1)$.

\begin{thm}\label{thm:owt-pe}
Let $\bar{\pi}=\left(\ln \theta+1\right)^2/\left(\ln \theta+3/4\right)<\ln \theta + 4/3$. The online
algorithm \textsf{\textsf{CR-Pursuit}}$(\bar{\pi})$ is feasible and is thus $\bar{\pi}$-competitive.
\end{thm}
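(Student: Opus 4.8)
The plan is to reduce everything to the upper bound $\bar\Phi(\pi)$ on $\Phi_\Delta(\pi)$ furnished by Lemma~\ref{lem:Phi_pi}, and to verify by direct algebra that the stated $\bar\pi$ is exactly the root of the equation $\bar\Phi(\pi)=\Delta$. Once this is established, feasibility (and hence $\bar\pi$-competitiveness) follows immediately: by Lemma~\ref{lem:Phi_pi} we have $\Phi_\Delta(\bar\pi)\le\bar\Phi(\bar\pi)=\Delta$, so by the definition of feasibility \textsf{CR-Pursuit}$(\bar\pi)$ is feasible, and as observed right after that definition a feasible \textsf{CR-Pursuit}$(\pi)$ is $\pi$-competitive. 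Note that I obtain exact equality $\bar\Phi(\bar\pi)=\Delta$, so I do not even need the monotonicity of $\bar\Phi$.

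First I would abbreviate $L\triangleq\ln\theta+1$, so that $\ln\theta+3/4=L-1/4$ and the claimed ratio reads $\bar\pi=L^2/(L-1/4)$. The bound $\bar\pi<L+1/3=\ln\theta+4/3$ is then a one-line computation: cross-multiplying by $(L-1/4)>0$ turns it into $L^2<(L+1/3)(L-1/4)=L^2+(L-1)/12$, i.e.\ $(L-1)/12>0$, which holds because $L=\ln\theta+1>1$ for any nondegenerate instance $\theta>1$.

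The core of the argument, and the only nontrivial step, is to show $\bar\Phi(\bar\pi)=\Delta$. Recalling $\bar\Phi(\pi)=2\Delta L\,[\pi(1+\sqrt{1-1/\pi})]^{-1}$, the equation $\bar\Phi(\pi)=\Delta$ is equivalent to $\pi(1+\sqrt{1-1/\pi})=2L$, and since $\pi\sqrt{1-1/\pi}=\sqrt{\pi(\pi-1)}$ this rearranges to $\sqrt{\bar\pi(\bar\pi-1)}=2L-\bar\pi$. The main obstacle is disposing of the square root cleanly; the key observation is the factorization $\bar\pi-1=L^2/(L-1/4)-1=(L^2-L+1/4)/(L-1/4)=(L-1/2)^2/(L-1/4)$, whence $\bar\pi(\bar\pi-1)=L^2(L-1/2)^2/(L-1/4)^2$ and therefore $\sqrt{\bar\pi(\bar\pi-1)}=L(L-1/2)/(L-1/4)$ (using $L\ge1>1/2$). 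On the other hand a direct computation gives $2L-\bar\pi=[2L(L-1/4)-L^2]/(L-1/4)=(L^2-L/2)/(L-1/4)=L(L-1/2)/(L-1/4)$, which matches. Since $2L-\bar\pi\ge0$, squaring is reversible and the identity $\bar\Phi(\bar\pi)=\Delta$ holds exactly.

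Finally, combining $\Phi_\Delta(\bar\pi)\le\bar\Phi(\bar\pi)=\Delta$ with the definition of feasibility shows that \textsf{CR-Pursuit}$(\bar\pi)$ is feasible, and hence maintains the offline-to-online revenue ratio $\bar\pi$ under every input, i.e.\ it is $\bar\pi$-competitive. Together with the inequality $\bar\pi<\ln\theta+4/3$ from the second step, this establishes the theorem. I would expect the square-root elimination in the third paragraph to be the only place demanding care; the remaining steps are bookkeeping once the factorization $\bar\pi-1=(L-1/2)^2/(L-1/4)$ is spotted.
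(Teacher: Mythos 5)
Your proposal is correct and follows the same route as the paper's proof: invoke Lemma~\ref{lem:Phi_pi}, check that $\bar{\Phi}(\bar{\pi})=\Delta$, and conclude feasibility hence $\bar{\pi}$-competitiveness. The paper simply asserts the identity $\bar{\Phi}(\bar{\pi})=\Delta$ without showing the algebra, whereas you verify it explicitly via the factorization $\bar{\pi}-1=(L-1/2)^2/(L-1/4)$, so yours is a fully spelled-out version of the same argument.
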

\begin{proof}
With $\bar{\pi}=\left(\ln \theta+1\right)^2/\left(\ln \theta+3/4\right)$, we have $\bar{\Phi}(\bar{\pi})=\Delta$. From Lemma
\ref{lem:Phi_pi}, we know that $\Phi_\Delta(\bar{\pi})\le \bar{\Phi}(\bar{\pi})=\Delta$. Thus the results are immediate. 
\end{proof}
Note that $\bar{\pi}<\ln\theta+4/3$, which is very close to the lower bound of $\ln \theta + 1$. It also improves beyond the result of $2(\ln\theta +1)$ if we follow the characterization in Sec.~\ref{UB_CR}; the improvement is because of the tighter bound through Lemma~\ref{upper_bound_v_t_OWTPE}.

\textred{\section{Beyond the worst case mentality}
Our  \textsf{CR-Pursuit} framework focuses only on achieving competitiveness under the worst case inputs. This may limit its applications as worst case inputs or situations may seldom occur in practice. Intuitively, a ``better" online algorithm would sell more of its inventory when the incoming revenue function is ``not adversarial'', i.e., being more opportunistic. By design, \textsf{CR-Pursuit} is pessimistic: it only maintains a fixed competitive ratio $\pi^*$ for the whole trading period, even if some inputs are not adversarial. One way to improve the performance of \textsf{CR-Pursuit} for non-adversarial cases is as follows: instead of trying to keep the competitive ratio as $\pi^{*}$ during the whole period, the online algorithm \textit{adaptively} chooses a $\pi_{t}$ to maintain at time $t$. This
$\pi_{t}$ is chosen as the smallest, yet attainable, competitive ratio
at time $t$, given the previous inputs and outputs of the algorithm, and taking into account the possible inputs in future slots.
This approach allows an online algorithm to instead pursue a competitive ratio more adaptive to the inputs, improving its average-case performance. We have recently applied the idea to develop online electric vehicle charging algorithms with optimal worst case and uniquely strong average-case performance~\cite{YLCONLINEEV-2019}. 

To better illustrate this idea, consider the following example of one-way trading. Let the first price be $p(1)=M$. The original \textsf{CR-Pursuit} algorithm sells an $\Delta/\left(\ln \theta +1\right)$ amount of inventory, and is satisfied with pursuing such competitive ratio at all time. The suggested algorithm in this section knows that the optimal offline value cannot increase any further, and would therefore sell all the inventory, i.e., we can set $\pi_1=1$. In this case, it will sell all the inventory in the first slot and achieve an offline-to-online revenue ratio of $1$ for the particular input. 


}

\section{Concluding Remarks}
\label{conclusion}
Online optimization is an important line of research with wide ranging applications.  It has been tackled by multiple algorithmic approaches over the previous decades, each proving successful for different problem variations, e.g., primal-dual approaches for online covering and packing problems or potential functions for the $k$-server problem. 

In this work, we present a \emph{novel} algorithmic framework for online optimization with inventory constraints.  The framework ``pursues'' a bound on the competitive ratio, tracking the changes in the offline optimal algorithm and ensuring that the offline-to-online revenue ratio for the instance remains bounded throughout the entire period. This idea allows us to provide an nearly optimal algorithm for online optimization with inventory constraints as well as generalizations of the classical one-way trading problem. Specifically, our analysis and algorithms generalize naturally to one-way trading problems with price elasticity and concave revenue functions, yielding almost optimal (in terms of competitive ratio) online algorithms in those settings. 



While our focus in this paper is on settings where inventory cannot be replenished,  there is a wide range of applications with both selling periods and buying periods, like battery arbitrage in contingency markets. Usually in these markets, prices are highly affected by the selling quantities and also other factors that vary in time, which lead to unknown incoming revenue functions. We believe that the  CR-Pursuit framework is promising for these problems as well, and can potentially be applicable to much broader classes of online optimization problems.  



For example, our focus in this paper has been on worst case analysis but the CR-Pursuit framework can also be used to provide ``beyond worst case'' results by parameterizing the bound in different ways by, for example,  utilizing properties of the instances relevant to the application, and  adaptively considering the input seen so far; see a recent example in~\cite{YLCONLINEEV-2019}. Additionally, the framework can make use of randomization when pursuing the CR bound.  This may allow improvement beyond the deterministic lower bound discussed in this paper, although it is an open question whether randomized algorithms can outperform deterministic algorithms for OOIC. 



\section{Acknowledgments}
We acknowledge the support received from Hong Kong University Grants Committee Theme-based Research Scheme Project No. T23-407/13-N and Collaborative Research Fund No. C7036-15G, NSF grant AST-134338, NSF AitF-1637598, NSF CNS-1518941, and NSF CPS-154471. In particular, John Pang wants to acknowledge the support from ASTAR, Singapore.

\bibliographystyle{ACM-Reference-Format}
\bibliography{ref}


\appendix

\section{Appendix} 

\subsection{A Binary Search Algorithm for Computing $\lambda^{*}$} \label{apx:offline.optimal.alg}
We summarize the algorithm in Algorithm~\ref{alg:Offline-algorithm}.
\begin{algorithm}[!ht]
\caption{A Binary search algorithm for Computing $\lambda^{*}$ \label{alg:Offline-algorithm}}
\begin{algorithmic}[1]

\IF{$\max_{v_t\in V_t(0)}\sum_{t=1}^Tv_t\leq\Delta$}

\RETURN $\lambda^{*}=0$;

\ELSE 

\STATE Pick $\lambda_{L}=0,$ $\lambda_{H}=\max_{t\in\mathcal{T}}\left(g'_t(0)\right)$;

\WHILE{$\left|\lambda_{L}-\lambda_{H}\right|>\epsilon$} 

\STATE $\lambda_{M}=\frac{\lambda_{L}+\lambda_{H}}{2},v_t=0,\forall t\in\mathcal{T}$;

\STATE 
Compute $\Delta_{\max}=\max_{v_t\in V_t(\lambda_M)}\sum_{t=1}^Tv_t$ \STATE Compute $\Delta_{\min}=\min_{v_t\in V_t(\lambda_M)}\sum_{t=1}^Tv_t$.

\IF{$\Delta_{\min}>\Delta$} 

\STATE $\lambda_{L}=\lambda_{M}$;

\ENDIF 

\IF{$\Delta_{\max}<\Delta$} 

\STATE $\lambda_{H}=\lambda_{M}$;

\ENDIF

\IF{$\Delta_{\min}\le\Delta\le\Delta_{\max}$}
\STATE break;
\ENDIF

\ENDWHILE 

\RETURN $\lambda^{*}=\lambda_{M}$;

\ENDIF 

\end{algorithmic}
\end{algorithm}

\subsection{Proof of Proposition \ref{thm:offline_solution}}
\begin{proof}
We prove this theorem by investigating the KKT conditions of problem
OOIC and exploring the structure of the optimal solution.

The Lagrangian for problem OOIC is defined as 
\[
L\left(v,\lambda,\mu\right)=\sum_{t=1}^{T}g_{t}(v_t)+\lambda\left(\Delta-\sum_{t=1}^{T}v_t\right)+\sum_{t=1}^{T}v_t\mu(t),
\]
where $\lambda\ge0$ and $\mu(t)\ge0$, $\forall t\in[T]$
are the Lagrangian multipliers. The following KKT conditions 
give us a set of necessary and sufficient conditions for optimality: 
\begin{align*}
    g'_{t}(v_t)-\lambda+\mu(t)= & 0, \quad\forall t\in[T],    \\
    \sum_{t=1}^{T}v_t\leq & \Delta,     \\
    v_t\geq & 0,\quad\forall t\in[T],    \\
    \mu(t)\geq& 0,\quad\forall t\in[T],  \\
    \lambda\geq & 0,    \\
    v_t\mu(t)=& 0,\quad\forall t\in[T],  \\
    \lambda\left(\sum_{t=1}^{T}v_t-\Delta\right)= & 0.
\end{align*}
Suppose $v^{*}$, $\mu^{*}$ and $\lambda^{*}$ are the optimal solutions
that satisfy the KKT conditions. Denote the set $\mathcal{T}_{0}=\{t|v^{*}_t>0,\forall t\in[T]\}$,
then according to the KKT conditions, we have 
\begin{align}
\mu^{*}(t) & =0,\quad\forall t\in\mathcal{T}_{0},\label{eq:ut}\\
\lambda^*\left(\sum_{t\in\mathcal{T}_{0}}v^{*}_t-\Delta\right) & =0,\label{eq:sum}\\
g'_{t}(v^{*}_t)-\lambda^{*} & =0,\quad\forall t\in\mathcal{T}_{0},\label{eq:vt}
\end{align}

Since $g'_t$ is concave, $g'_{t}(\cdot)$ is non-increasing in $v_t$. According to (\ref{eq:vt}) we have 
\[
g'_t(0)\geq g'_t(v^{*}_t)=\lambda^{*},\quad\forall t\in\mathcal{T}_{0};
\]
namely, 
\begin{equation}
g'_t(0)\ge\lambda^{*}\quad\forall t\in\mathcal{T}_{0}.\label{eq:T0}
\end{equation}
Thus given a $\lambda^{*}$, we can use (\ref{eq:T0}) to determine
the set $\mathcal{T}_{0}$.

For ease of presentation, we denote $$V_t(\lambda)=\{v|g'_t(v)=\lambda,v\in[0,\Delta]\}.$$ Now consider the following two cases:

(1) $\Delta\ge\max_{v_t\in V_t(0)}\sum_{t=1}^T v_t$.
In this case, we observe that the solution 
\begin{align*}
v^{*}_t & \in V_t(0),\forall t\in[T],\\
\lambda^{*} & =0,\\
\mu^{*}(t) & =0,\forall t\in[T],
\end{align*}
satisfies the KKT conditions, thus it is the optimal solution.

(2) $\Delta<\max_{v_t\in V_t(0)}\sum_{t=1}^T v_t$. In this case, we must have $\lambda^*>0$.
According to (\ref{eq:sum}) and (\ref{eq:vt}), we have 
\begin{equation*}
    v^*_t \in  V_t(\lambda^*) \mbox{ and } \sum_{t=1}^Tv^*_t=\Delta.
\end{equation*}

It is straightforward to check that  $v_t,\forall t\in\mathcal{T}_{0}$,
is non-increasing w.r.t. $\lambda$. Meanwhile, according
to (\ref{eq:T0}), we know that the size of set $\mathcal{T}_{0}$
is non-increasing w.r.t. $\lambda$. Putting together these two observations,
we conclude that $\sum_{t\in\mathcal{T}_{0}}v_t$ is non-increasing w.r.t. $\lambda$. Thus given $\Delta>0$, there exists
a unique $\lambda=\lambda^{*}$ that satisfies $\sum_{t\in\mathcal{T}_{0}}v^{*}_t=\Delta$.
Since KKT conditions are necessary and sufficient for optimality of
convex problems, we can conclude that $\lambda^{*}$ is the optimal
dual solution.
\end{proof}

\subsection{Proof of Lemma \ref{lem:opt_difference_bound}}
\label{app:proof_opt_difference_bound}
\begin{proof}[Proof of Lemma \ref{lem:opt_difference_bound}]
We prove this lemma in the following two steps:

Step I, we prove that  $\eta_{OPT}\left(\sigma^{[1:t]}\right)-\eta_{OPT}\left(\sigma^{[1:t-1]}\right)\geq g_t(\tilde{v}_t)-\lambda_{t}\tilde{v}_t$. To see this, we denote optimal solution at time $\tau\in[t]$ under input $\sigma^{[1:t]}$ as $\tilde{v}_{\tau}.$ Note that $\tilde{v}_{\tau}\in V_\tau(\lambda_t),\tau\in[t]$ or  $\tilde{v}_{\tau}=0$  if $V_\tau(\lambda_t)=\emptyset $.  
Similarly, denote optimal solution at time $\tau\in[t-1]$ under input $\sigma^{[1:t-1]}$ as $\bar{v}_{\tau}.$ Note that $\bar{v}_{\tau}\in V_\tau(\lambda_{t-1}),\tau\in[t-1]$ or $\bar{v}_{\tau}=0$ if $V_\tau(\lambda_{t-1})=\emptyset$. Also $\tilde{v}_{\tau}\leq\bar{v}_{\tau},\tau\in[t-1]$ (by the non-increasing of $g_{t}'(v)$ and $\lambda_{t}\geq\lambda_{t-1}$). Then we have
\begin{align*}
 \eta_{OPT}\left(\sigma^{[1:t]}\right)-\eta_{OPT}\left(\sigma^{[1:t-1]}\right) = & \sum_{\tau=1}^{t}g_{\tau}(\tilde{v}_{\tau})-\sum_{\tau=1}^{t-1}g_{\tau}(\bar{v}_{\tau})\\
  = & g_{t}(\tilde{v}_{t})+\sum_{\tau=1}^{t-1}\left(g_{\tau}(\tilde{v}_{\tau})-g_{\tau}(\bar{v}_{\tau})\right)\\
  \stackrel{(a)}{\geq} & g_{t}(\tilde{v}_{t})+\sum_{\tau=1}^{t-1}\lambda_{t}(\tilde{v}_{\tau}-\bar{v}_{\tau})\\
  \stackrel{(b)}{\geq} & g_{t}(\tilde{v}_{t})-\lambda_{t}\tilde{v}_{t}. 
\end{align*}
For (a), it comes from the concavity of $g_{\tau}(v)$ and $\tilde{v}_{\tau}\leq\bar{v}_{\tau},\tau\in[t-1]$.
For (b), we claim that $\sum_{\tau=1}^{t-1}\bar{v}_{\tau}\leq\sum_{\tau=1}^{t}\tilde{v}_{\tau}$. To see this, when $\lambda_t=0$, we must have $\lambda_{t-1}=0$. In this case, $\tilde{v}_\tau=\bar{v}_\tau,\forall \tau\in [t-1]$ and thus we have $\sum_{\tau=1}^{t-1}\bar{v}_{\tau}\leq\sum_{\tau=1}^{t}\tilde{v}_{\tau}.$  When $\lambda_t>0$, from the KKT conditions in \eqref{eq:sum}, we have $\sum_{\tau=1}^{t}\tilde{v}_{\tau}=\Delta\geq \sum_{\tau=1}^{t-1}\bar{v}_{\tau}$. 
Then we conclude that $\sum_{\tau=1}^{t-1}\bar{v}_{\tau}\leq\sum_{\tau=1}^{t}\tilde{v}_{\tau}$ and consequently, we have  $\sum_{\tau=1}^{t-1}(\tilde{v}_{\tau}-\bar{v}_{\tau})\geq -\tilde{v}_{t}$.

Step II, we prove that $\eta_{OPT}\left(\sigma^{[1:t]}\right)-\eta_{OPT}\left(\sigma^{[1:t-1]}\right)\leq g_t(\tilde{v}_t)-\lambda_{t-1}\tilde{v}_t\leq g_{t}(\tilde{v}_{t})\leq g_{t}(\hat{v}_{t})$. Similarly, we have 
\begin{align*}
 \eta_{OPT}\left(\sigma^{[1:t]}\right)-\eta_{OPT}\left(\sigma^{[1:t-1]}\right) = & g_{t}(\tilde{v}_{t})+\sum_{\tau=1}^{t-1}(g_{\tau}(\tilde{v}_{\tau})-g_{\tau}(\bar{v}_{\tau}))\\
  \stackrel{(a)}{\leq} & g_{t}(\tilde{v}_{t})+\sum_{\tau=1}^{t-1}\lambda_{t-1}(\tilde{v}_{\tau}-\bar{v}_{\tau})\\
  \stackrel{(b)}{=} & g_{t}(\tilde{v}_{t})-\lambda_{t-1}\tilde{v}_{t}\\
   \leq & g_{t}(\tilde{v}_{t})\leq g_{t}(\hat{v}_{t}).
\end{align*}
For (a), it is by the concavity of ${g_\tau}$: $g_{\tau}(\tilde{v}_{\tau})\leq g_{\tau}(\bar{v}_{\tau})+\lambda_{\tau-1}(\tilde{v}_{\tau}-\bar{v}_{\tau})$  (Note that $\lambda_{\tau}= g_\tau'(\bar{v}_{\tau})$) and $\lambda_{t-1}\geq \lambda_\tau,\forall \tau\in[t-1]$.
For (b), when $\lambda_{t-1}=0$, it holds immediately; when $\lambda_{t-1} > 0$, we have $\sum_{\tau}^{t}\tilde{v}_\tau=\Delta=\sum_{\tau=1}^{t-1}\bar{v}_\tau$, which implies $\sum_{\tau=1}^{t-1}(\tilde{v}_{\tau}-\bar{v}_{\tau}) =- \tilde{v}_{t}$.
\end{proof}

\subsection{Proof of Lemma~\ref{lem:interchange_input_opt}}
\label{app:proof_interchange_input_opt}
\begin{proof}
Denote the input under $\tilde{\sigma}$ as ${g}_t$.
Denote the input under $\bar{\sigma}$ as $\bar{g}_t$,  The optimal dual variable under $\tilde{\sigma}^{[1:t]}$ (resp. $\bar{\sigma}^{[1:t]}$ ) as $\lambda_t$ (resp. $\bar{\lambda}_t$). We have,
\[
g_t=\bar{g}_t, \forall t\leq \tau-1 \lor t \geq \tau+2.
\]
Besides, $g_\tau=\bar{g}_{\tau+1},g_{\tau+1}=\bar{g}_\tau$. Let $v_t$ (resp. $\bar{v}_t$) be the optimal offline solution at time $t$ given the input $\tilde{\sigma}^{[1:t]}$ (resp. $\bar{\sigma}^{[1:t]}$).

1) If $\lambda_\tau \leq \bar{\lambda}_\tau$, then
\begin{align*}
\eta_{OPT}\left(\tilde{\sigma}^{[1:\tau]}\right)-\eta_{OPT}\left(\tilde{\sigma}^{[1:\tau-1]}\right)\stackrel{(a)}{\geq} & g_\tau(v_\tau)-\lambda_{\tau}v_\tau  \\
\stackrel{(b)}{\geq} & g_\tau(\bar{v}_{\tau+1})-{\lambda}_{\tau}\bar{v}_{\tau+1}\\
 \stackrel{(c)}{\geq}  & g_\tau(\bar{v}_{\tau+1})-\bar{\lambda}_{\tau}\bar{v}_{\tau+1} \\
\stackrel{(a)}{\geq} & \eta_{OPT}\left(\bar{\sigma}^{[1:\tau+1]}\right)-\eta_{OPT}\left(\bar{\sigma}^{[1:\tau]}\right).
\end{align*}
For (a), it is by lemma~\ref{lem:opt_difference_bound}. For (b), it is by the concavity of $g_t$ and for (c), it by $\lambda_\tau \leq \bar{\lambda}_\tau$.

2) If $\lambda_\tau \geq \bar{\lambda}_\tau$, then similarly
\begin{align*}
\eta_{OPT}\left(\bar{\sigma}^{[1:\tau]}\right)-\eta_{OPT}\left(\bar{\sigma}^{[1:\tau-1]}\right)
\stackrel{(a)}{\geq} & g_{\tau+1}(\bar{v}_{\tau})-\bar{\lambda}_{\tau}\bar{v}_{\tau}  \\
 \stackrel{(b)}{\geq}  & g_{\tau+1}(\bar{v}_{\tau})-{\lambda}_{\tau}\bar{v}_{\tau} \\
 \stackrel{(c)}{\geq}  & g_{\tau+1}(v_{\tau+1})-{\lambda}_{\tau}v_{\tau+1} \\
\stackrel{(a)}{\geq} & \eta_{OPT}\left(\tilde{\sigma}^{[1:\tau+1]}\right)-\eta_{OPT}\left(\tilde{\sigma}^{[1:\tau]}\right).
\end{align*}
For (a), it is by lemma~\ref{lem:opt_difference_bound}. For (b), it is by $\lambda_\tau \geq \bar{\lambda}_\tau$. For (c), it is by the concavity of $g_t$. Also, with
\begin{align*}
    &\eta_{OPT}\left(\bar{\sigma}^{[1:\tau]}\right)-\eta_{OPT}\left(\bar{\sigma}^{[1:\tau-1]}\right)+\eta_{OPT}\left(\bar{\sigma}^{[1:\tau+1]}\right)-\eta_{OPT}\left(\bar{\sigma}^{[1:\tau]}\right)\\
    = & \eta_{OPT}\left(\bar{\sigma}^{[1:\tau+1]}\right)-\eta_{OPT}\left(\bar{\sigma}^{[1:\tau-1]}\right)\\
    = &\eta_{OPT}\left(\tilde{\sigma}^{[1:\tau+1]}\right)-\eta_{OPT}\left(\tilde{\sigma}^{[1:\tau-1]}\right)\\
    = & \eta_{OPT}\left(\tilde{\sigma}^{[1:\tau+1]}\right)-\eta_{OPT}\left(\tilde{\sigma}^{[1:\tau]}\right)+\eta_{OPT}\left(\tilde{\sigma}^{[1:\tau]}\right)-\eta_{OPT}\left(\tilde{\sigma}^{[1:\tau-1]}\right),
\end{align*}
we can have
\[
\eta_{OPT}\left(\bar{\sigma}^{[1:\tau+1]}\right)-\eta_{OPT}\left(\bar{\sigma}^{[1:\tau]}\right)\leq \eta_{OPT}\left(\tilde{\sigma}^{[1:\tau]}\right)-\eta_{OPT}\left(\tilde{\sigma}^{[1:\tau-1]}\right). 
\]
\end{proof}

\subsection{Proof of Lemma~\ref{lem:increasing_marginal_worstcase}}
\label{app:proof_increasing_marginal_worstcase}
\begin{proof}[Proof of Lemma \ref{lem:increasing_marginal_worstcase}]
Suppose an arbitrary $\tilde{\sigma}\in\arg\max_\sigma \sum_t{v_t}$, under which $g_t'(v_t)$ is not non-decreasing in $t$, where $v_t$ is the selling quantity of CR-Pursuit($\pi$) under $\tilde{\sigma}$. That is, exist a $\tau$, $g_{\tau}'(v_\tau) > g_{\tau+1}'(v_{\tau+1})$. Denote the optimal dual variables under $\tilde{\sigma}^{[1:t]}$ as $\lambda_t$. Note that $\lambda_t$ is non-decreasing in $t$. Without loss of generality, we assume that $\lambda_t < \lambda_{t+1}$ or $\lambda_t=\lambda_{t+1}=0$, $\forall t$. We construct a new input sequence $\bar{\sigma}$ by interchanging $g_\tau$ and $g_{\tau+1}$ in $\tilde{\sigma}$ and denote the input under $\bar{\sigma}$ as $\bar{g}_t$, the output of CR-Pursuit$(\pi^{*})$ under $\bar{\sigma}$ as $\bar{v}_t$. The optimal dual variable under $\bar{\sigma}^{[1:t]}$ as $\bar{\lambda}_t$. By definition, we can easily observe that,
\[
\eta_{OPT}\left(\tilde{\sigma}^{[1:t]}\right)=\eta_{OPT}\left(\bar{\sigma}^{[1:t]}\right), \forall t\leq \tau-1 \lor t \geq \tau+1;
\]
\[
v_t=\bar{v}_t, \forall t\leq \tau-1 \lor t \geq \tau+2;
\]
\[
g_t=\bar{g}_t, \forall t\leq \tau-1 \lor t \geq \tau+2.
\]
Besides, $g_\tau=\bar{g}_{\tau+1},g_{\tau+1}=\bar{g}_\tau$. We claim that $\bar{\sigma}\in\arg\max_\sigma \sum_t{v_t}$ and $\bar{g}_{\tau}'(\bar{v}_\tau)=g_{\tau+1}'(v_{\tau+1}) < g_{\tau}'({v}_{\tau})=\bar{g}_{\tau+1}'(\bar{v}_{\tau+1})$. 
To see this, consider the following two cases.

(1) $\lambda_\tau=\lambda_{\tau+1}=0$. Under this case, we have \begin{align*}
\eta_{OPT}\left(\tilde{\sigma}^{[1:\tau]}\right)-\eta_{OPT}\left(\tilde{\sigma}^{[1:\tau-1]}\right)&=\eta_{OPT}\left(\bar{\sigma}^{[1:\tau+1]}\right)-\eta_{OPT}\left(\bar{\sigma}^{[1:\tau]}\right)\\
&=g_\tau(\hat{v}_\tau),
\end{align*}
where $\hat{v}_\tau=\arg\max_v g_\tau(v)$. Then $v_\tau=\bar{v}_{\tau+1}$. Similarly, we have $v_{\tau+1}=\bar{v}_\tau$. $\sum_t v_t=\sum_t \bar{v}_t$. We conclude that $\bar{\sigma}\in\arg\max_\sigma \sum_t{v_t}$ and $\bar{g}_{\tau}'(\bar{v}_\tau)=g_{\tau+1}'(v_{\tau+1}) < g_{\tau}'({v}_{\tau})=\bar{g}_{\tau+1}'(\bar{v}_{\tau+1})$.

(2) $0\leq \lambda_\tau < \lambda_{\tau+1}$. First, we have
\begin{align*}
 g_\tau(v_\tau)+g_{\tau+1}(v_{\tau+1})&=\frac{\eta_{OPT}\left(\tilde{\sigma}^{[1:\tau+1]}\right)-\eta_{OPT}\left(\tilde{\sigma}^{[1:\tau-1]}\right)}{\pi^*}\\  
   & =\frac{\eta_{OPT}\left(\bar{\sigma}^{[1:\tau+1]}\right)-\eta_{OPT}\left(\bar{\sigma}^{[1:\tau-1]}\right)}{\pi^*}\\
   & =g_{\tau+1}(\bar{v}_\tau)+g_{\tau}(\bar{v}_{\tau+1}),
\end{align*}
which implies 
\[
 g_\tau(v_\tau)-g_{\tau}(\bar{v}_{\tau+1})=g_{\tau+1}(\bar{v}_{\tau})-g_{\tau+1}(v_{\tau+1}).
\]
Second, we claim that $\bar{v}_{\tau+1} \leq v_{\tau}$. From Lemma~\ref{lem:interchange_input_opt}, we have $$\eta_{OPT}\left(\bar{\sigma}^{[1:\tau+1]}\right)-\eta_{OPT}\left(\bar{\sigma}^{[1:\tau]}\right)\leq \eta_{OPT}\left(\tilde{\sigma}^{[1:\tau]}\right)-\eta_{OPT}\left(\tilde{\sigma}^{[1:\tau-1]}\right).$$ Then
$g_\tau(v_\tau)\geq g_\tau(\bar{v}_{\tau+1})$ and $\bar{v}_{\tau+1}\leq v_\tau$ are straightforward.

Third, we show $g_\tau(v_\tau) = g_\tau(\bar{v}_{\tau+1})$ and thus $\bar{v}_{\tau+1} = v_\tau$ by contradiction. Suppose $g_\tau(v_\tau) > g_\tau(\bar{v}_{\tau+1})$ and thus $\bar{v}_{\tau+1}< v_\tau$. we show that $\sum_t{v_t}< \sum_t{\bar{v}_t}$ which contradict the fact that $\tilde{\sigma}\in \arg\max_\sigma \sum_t{v_t}$. To see this, observe that we have
\begin{align*}
   g_{\tau+1}'(v_{\tau+1})(v_\tau-\bar{v}_{\tau+1}) \stackrel{(a)}{<} &-g_\tau'(v_\tau)(\bar{v}_{\tau+1}-v_\tau)\\
    \stackrel{(b)}{\leq} & g_\tau(v_\tau)-g_{\tau}(\bar{v}_{\tau+1})\\
    =    &g_{\tau+1}(\bar{v}_{\tau})-g_{\tau+1}(v_{\tau+1})\\
    \stackrel{(b)}{\leq} & g_{\tau+1}'(v_{\tau+1})(\bar{v}_{\tau}-v_{\tau+1}).
\end{align*}
For (a), it is by $g_\tau'(v_\tau) > g_{\tau+1}'(v_{\tau+1}) \geq \lambda_{t+1}> 0$ and $\bar{v}_{\tau+1} < v_{\tau}$. For (b), it is from the concavity of ${g_\tau}$. 
As $g_{\tau+1}'(v_{\tau+1})\geq \lambda_{\tau+1}>0$, we have
\[
v_\tau+v_{\tau+1} < \bar{v}_\tau+\bar{v}_{\tau+1},
\]
which leads to $\sum_t v_t < \sum_t \bar{v}_t$. 

So we conclude that $g_\tau(v_\tau) = g_\tau(\bar{v}_{\tau+1})$ and thus $\bar{v}_{\tau+1} = v_\tau$. Consequently, $g_{\tau}(v_{\tau+1}) = g_{\tau+1}(\bar{v}_{\tau})$ and thus $\bar{v}_{\tau} = v_{\tau+1}$. It is then straightforward that 
\[
\bar{\sigma}\in\arg\max_\sigma \sum_t{v_t},
\] and
\[
\bar{g}_{\tau}'(\bar{v}_\tau)=g_{\tau+1}'(v_{\tau+1}) < g_{\tau}'({v}_{\tau})=\bar{g}_{\tau+1}'(\bar{v}_{\tau+1}).
\]

By continuously interchanging $g_\tau$  and $g_{\tau+1}$ that fails to satisfy $g'_{\tau+1}({v}_\tau) \leq g'_{\tau}(v_{\tau+1})$, we finally attain a sequence in $\arg\max_\sigma \sum_t{v_t}$ such that $g_t'(v_t)$ is non-decreasing in $t$.
\end{proof}

\subsection{Proof of Lemma~\ref{lem:upper_bound_v_t_OOIC}}\label{apx:proof.lem.upper_bound_v_t_OOIC}
\begin{proof}
First, from Lemma \ref{lem:opt_difference_bound}, we easily conclude that $\bar{v}_{t}\leq \hat{v}_{t}$, where $\hat{v}_t$ is the optimizer of $g_t(\cdot)$. 
By the concavity of $g_t(\cdot)$, we have
\[
g_t(\bar{v}_t)\geq \frac{\bar{v}_t}{\hat{v}_t}g_t\left(\hat{v}_t\right)+\left(1-\frac{\bar{v}_t}{\hat{v}_t}\right)g_t(0)\geq \frac{\bar{v}_t}{\hat{v}_t}g_t\left(\hat{v}_t\right),
\]
which then gives
$
\bar{v}_t\leq \frac{g_t(\bar{v}_t)}{g_t\left(\hat{v}_t\right)/\hat{v}_t}.
$
Then, using the definition of $c$, we arrive at
\[
\bar{v}_t\leq \frac{p(t)}{g_t\left(\hat{v}_t\right)/\hat{v}_t}\frac{g_t(\bar{v}_t)}{p(t)}\leq c\frac{g_t(\bar{v}_t)}{p(t)}.
\]
\end{proof}

\subsection{Proof of Lemma~\ref{lem:bound_on_x_OOIC}}
\begin{proof}
For ease of presentation, define  
\[
x_p \triangleq\frac{\pi}{\Delta }\sum_{\left\{t:\;p\left(t\right)\leq p\right\}}g_t(\bar{v}_t).
\]
It is then equilivalent to show that $x_p\leq p$. Define $T_{1}\triangleq\min\{t:p(\tau)>p,\forall\tau\geq t\}-1$, i.e., for any $t>T_{1}$, we have $p(t)>p$, or equivalently if $p(t)\leq p,$ then $t\le T_{1}$. 
By definition, $x_{p}$ is determined by $\sigma^{[1:T_{1}]}$ only. Thus, in this proof, we only focus on the input horizon $t\in [T_{1}]$. 

We first consider a special case when $ p(t)\leq p,\;\forall t\in[T_{1}]$. By that $g_t(v),\forall t\in [T_1]$ are concave functions,  we have
$$\eta_{OPT}\left(\sigma^{[1:T_1]}\right)=\sum_{t=1}^{T_1}g_t\left(v_t^*\right)\leq \sum_{t=1}^{T_1} \left(g_t(0)+g'_t(0)v_t^*\right)=\sum_{t=1}^{T_1} p(t)v_t^*,$$
where $v_t^*,t\in[T_1]$ are the solution of the optimal offline algorithm under input $\sigma^{[1:T_1]}$. Then according to \eqref{eq:CR-Pursuit:aggregate.revenue} and that $\eta_{OPT}\left(\sigma^{[1:T_1]}\right)\leq p\cdot\Delta$, we have 
\begin{equation*}
x_{p} =\frac{\pi}{\Delta}\sum_{t=1}^{T_1}g_t(\bar{v}_t)
 =\frac{1}{\Delta}\eta_{OPT}\left(\sigma^{[1:T_1]}\right) \leq p.
\end{equation*}

We now consider the general cases, where there could be some slot(s) $\tau\in[T_1]$ such that $p(\tau)>p$. The we construct a new input sequence $\bar{\sigma}$ by interchange $g_\tau$ and $g_{\tau+1}$ in $\sigma$. Denote the input under $\bar{\sigma}$ as $\bar{g}_t$.  Let \textbf{$\bar{x}_{p}$, $\bar{p}(t)$} be the corresponding variables under $\bar{\sigma}$. 

To show that $x_p\leq p$, we first show $x_p\leq \bar{x}_p$. By definition, we observe that,
\[
\eta_{OPT}\left({\sigma}^{t}\right)=\eta_{OPT}\left(\bar{\sigma}^{t}\right), \forall t\leq \tau-1 \lor t \geq \tau+1;
\]
\[
g_t=\bar{g}_t, \forall t\leq \tau-1 \lor t \geq \tau+2.
\]
Besides, $g_\tau=\bar{g}_{\tau+1},g_{\tau+1}=\bar{g}_\tau.$
We discuss two cases.
\begin{itemize}
    \item  When $p(\tau+1)>p$: it is easy to see that $\bar{x}_p=x_p$.

 We then prove $x_p\leq p$ as follows: We continuously interchange with $p(\tau)>p$ with the input at its next slot until all the slots with $p(t)\leq p$ is at the front of it. At the meantime, $x_p$ keeps on non-decreasing.  Finally, we get a $\sigma'$, in which the price at each slot in $[T'_1]$ ($T'_1$ is corresponding to $T_1$ but defined under $\sigma'$) is less or equal to $p$, and $x_{p}\leq x_{p}'.$ Since in $\sigma'$, $p\geq p(t),\;\forall t$, from our analysis in the first part (special case), we have \textbf{$x_{p}'\leq p$}. It then follows that $x_{p}\leq p$.

    \item When $p(\tau+1)\leq p$: we have
\begin{align*}
{x}_{p}-\bar{x}_{p}= & \frac{\eta_{OPT}\left(\sigma^{[1:\tau+1]}\right)-\eta_{OPT}\left({\sigma}^{[1:\tau]}\right)}{\Delta}\\
 & -\frac{\eta_{OPT}\left(\bar{\sigma}^{[1:\tau]}\right)-\eta_{OPT}\left(\bar{\sigma}^{[1:\tau-1]}\right)}{\Delta}\\
 \stackrel{(a)}{\leq} & 0,
\end{align*}
where the step (a) is because of Lemma~\ref{lem:interchange_input_opt}.
\end{itemize}

Next, we prove $x_p\leq p$. We continuously interchange with $p(\tau)>p$ with the input at its next slot until all the slots with $p(t)\leq p$ is at the front of it. At the meantime, $x_p$ keeps on non-decreasing.  Finally, we get a $\sigma'$, in which the price at each slot in $[T'_1]$ ($T'_1$ is corresponding to $T_1$ but defined under $\sigma'$) is less or equal to $p$, and $x_{p}\leq x_{p}'.$ Since in $\sigma'$, $p\geq p(t),\;\forall t$, from our analysis in the first part (special case), we have \textbf{$x_{p}'\leq p$}. It then follows that $x_{p}\leq p$.

\end{proof}

\subsection{Proof of Lemma~\ref{lem:to_goal_OOIC}}
\begin{proof}

Suppose in $\sigma^{[1:T]}$, $p(t)$ takes $n$ different values, which are
denoted as $m\le p_{1}\leq p_{2}\leq\cdots\cdots\le p_{n}\le M$.
And define $y_{i}\triangleq\sum_{t,\;p(t)=p_{i}}\frac{\pi}{\Delta}g_t(\bar{v}_t)$.
Note that we have 
\[
\sum_{t=1}^{T}\frac{g_t(\bar{v}_t)}{p(t)}=\frac{\Delta}{\pi}\sum_{i=1}^{n}\frac{y_{i}}{p_{i}}.
\]
From Lemma \ref{lem:bound_on_x_OOIC}, we have $\sum_{j=1}^{i}y_{j}=x_{p_i}\leq p_{i}$.

Consider the following optimization problem:

\begin{eqnarray*}
\max &  & \sum_{i=1}^{n}\frac{y_{i}}{p_{i}}\\
s.t. &  & \sum_{j=1}^{i}y_{j}\leq p_{i},i\in[n]\\
 &  & y_{i}\geq0,i\in[n].
\end{eqnarray*}
The KKT conditions are sufficient and necessary conditions for optimality
for the above convex problem. Denote $\mu_{i}\ge0,i\in[n]$ as the
dual variables, then the KKT conditions can be expressed as:
\begin{align}
\frac{1}{p_{i}}-\sum_{j=1}^{n+1-i}\mu_{i} & =0,\forall i\in[n],\label{eq:KKT_1_OOIC}\\
\mu_{i}(p_{i}-\sum_{j=1}^{i}y_{j}) & =0,\forall i\in[n],\label{eq:KKT_2_OOIC}\\
\mu_{i} & \ge0,\forall i\in[n],\nonumber \\
y_{i} & \ge0,\forall i\in[n].\nonumber 
\end{align}
From (\ref{eq:KKT_1_OOIC}), we know that $\mu_{i}>0$ for all $i\in[n]$.
Thus from (\ref{eq:KKT_2_OOIC}), we have
\[
p_{i}-\sum_{j=1}^{i}y_{j}=0,\forall i\in[n].
\]
Thus we know the optimal primal solution is 
\[
y_{i}=p_{i}-p_{i-1},\forall i\in[n],
\]
where $p_{0}=0$. And the optimal objective value equals to $\sum_{i=1}^{n}\frac{p_{i}-p_{i-1}}{p_{i}}$. 

So 
\begin{eqnarray*}
\sum_{t=1}^{T}\frac{g_t(\bar{v}_t)}{p(t)} & = & \frac{\Delta}{\pi}\sum_{i=1}^{n}\frac{y_{i}}{p_{i}}\\
 & \leq & \frac{\Delta}{\pi}\sum_{i=1}^{n}\frac{p_{i}-p_{i-1}}{p_{i}}\\
 & = & \frac{\Delta}{\pi}\left(\frac{p_{1}}{p_{1}}+\sum_{i=2}^{n}\frac{p_{i}-p_{i-1}}{p_{i}}\right)\\
 & \leq & \frac{\Delta}{\pi}\left(1+\int_{p_{1}}^{p_{n}}\frac{1}{x}dx\right)\\
 & \leq &\frac{\Delta}{\pi}\left( 1+\ln\theta\right).
\end{eqnarray*}
This completes our proof.

\end{proof}

\subsection{Proof of Lemma~\ref{lem:increasing_price}}
\begin{proof}
We show that any input $\sigma^{[1:T]}$ is equivalent to (in the sense that the behaviors of both offline algorithm and the proposed online algorithm remain unchanged) an increasing price sequence as the following:
\begin{equation}
m\le p_{1}<p_{2}<\cdots<p_{n}\le M,\label{eq:increasing_price_seq}
\end{equation}
where $n\le T$. According to (\ref{eq:keep_cr_OWT}), \textsf{CR-Pursuit}$(\pi)$ will sell only when the current price is larger than the highest price in history. Thus for any input $\sigma^{[1:T]}$, we can delete the slots when \textsf{CR-Pursuit}$(\pi)$ does not sell, and the outputs of \textsf{CR-Pursuit}$(\pi)$ is then equivalent to the resulting increasing price sequence.
\end{proof}

\subsection{Proof of Lemma~\ref{upper_bound_v_t_OWTPE}}
\label{app:proof_upper_bound_v_t_OWTPE}
\begin{proof}

By Lemma \ref{lem:opt_difference_bound} and definition of $g_t(\bar{v}_t)$, we know that $\pi g_t(\bar{v}_t)=\eta_{OPT}\left(\sigma^{[1:t}\right)-\eta_{OPT}\left(\sigma^{[1:t-1]}\right)\leq g_t(\hat{v}_t)$ and then $\bar{v}_{t}\leq \hat{v}_{t}$, where $\hat{v}_t$ as the optimizer of $g_t(v_t)$. To simplify the explanation, let $k=g_{t}(\bar{v}_{t})$ and $\alpha=\frac{f\left(\hat{v}_t\right)}{\hat{v}_t}$. 

Define $\tilde{g}_t(v_t)= \left(p(t)-\alpha{v}_t\right){v}_t$.
By the convexity of $f_t(\cdot), f_t(0)=0$, we have $f_t(v_t)\leq \alpha v_t$,$\forall v_t\leq\hat{v}_t$. Then $g_t(v_t)\geq \tilde{g}_t(v_t),\forall v_t\leq\hat{v}_t$

Suppose $\tilde{v}_t$ is the smaller solution satisfying $\tilde{g}_t(\tilde{v}_t)=k$, i.e.,
\[
\tilde{v}_t=\frac{p(t)-\sqrt{p^2(t)-4\alpha k}}{2\alpha}=\frac{2k}{p(t)\left(1+\sqrt{1-\frac{4\alpha k}{p^2(t)}}\right)}.
\]

By observing
\[
k\pi\leq g_t(\hat{v}_t)=\tilde{g}_t(\hat{v}_t)\leq\frac{p^2(t)}{4\alpha},
\]
we have $\frac{4\alpha k }{p^2_t}\leq\frac{1}{\pi}$ (note that this also implies the existence of $\tilde{v}_t$). We then easily conclude 
\[
\tilde{v}_t\leq \frac{2k}{p(t)\left(1+\sqrt{1-\frac{1}{\pi}}\right)}.
\]

We claim $\bar{v}_t\leq\tilde{v}_t$. If $\tilde{v}_t> \hat{v}_t$, we have $\bar{v}_{t}\leq \hat{v}_{t}<\tilde{v}_t$; Otherwise, $\tilde{v}_t\leq\hat{v}_t$, we have
\[
k=g_t(\bar{v}_t)=\tilde{g}_t(\tilde{v}_t)\leq g_t{\left(\tilde{v}_t\right)}.
\]
Following $g_t({v_t})$ is increasing in $[0,\hat{v}_t]$, we conclude $\bar{v}_t\leq \tilde{v}_t$.

Finally, we conclude
\[
\bar{v}_t\leq \frac{2k}{p(t)\left(1+\sqrt{1-\frac{1}{\pi}}\right)} = \frac{2}{\left(1+\sqrt{1-\frac{1}{\pi}}\right)}\frac{g_t(\bar{v}_t)}{p(t)}.
\]

\end{proof}

\subsection{Proof of Lemma~\ref{lem:Phi_pi}}
\begin{proof}

From Lemma~\ref{upper_bound_v_t_OWTPE}, we have
\begin{align*}
\Phi_\Delta(\pi) & =\max_{\sigma^{[1:T]}}\sum_{t=1}^{T}v_t\le\frac{2}{\left(1+\sqrt{1-\frac{1}{\pi}}\right)}\sum_{t=1}^{T}\frac{g_t(\bar{v}_t)}{p(t)}.
\end{align*}
By Lemma \ref{lem:to_goal_OOIC}, 
 we know that 
\begin{equation}
\sum_{t=1}^{T}\frac{g_t(\bar{v}_t)}{p(t)}\leq\frac{\Delta}{\pi}(1+\ln\theta).\label{eq:goal}
\end{equation}
Then we can bound $\Phi_\Delta(\pi)$ as 
\begin{align*}
\Phi_\Delta(\pi) & \le\frac{2}{\left(1+\sqrt{1-\frac{1}{\pi}}\right)}\sum_{t=1}^{T}\frac{g_t(\bar{v}_t)}{p(t)}\\
 & \le\frac{2\Delta}{\pi\left(1+\sqrt{1-\frac{1}{\pi}}\right)}(1+\ln\theta).
\end{align*}
This completes our proof.

\end{proof}

\end{document}